\newtheorem{theorem}{Theorem}
\newtheorem{lemma}[theorem]{Lemma}
\newtheorem*{lemma*}{Lemma}
\newtheorem{proposition}[theorem]{Proposition}
\newtheorem{corollary}[theorem]{Corollary}
\newtheorem*{open*}{Open~question}
\newtheorem{definition}[theorem]{Definition}
\newcommand{\e}{{\mathrm e}}
\begin{document}

\title{A Wronskian Approach to the Real $\tau$-Conjecture}

\author{Pascal Koiran, Natacha Portier and S\'ebastien Tavenas}

\affil{LIP\footnote{UMR 5668 ENS Lyon - CNRS - UCBL - INRIA,
    Université de Lyon, \{pascal.koiran,natacha.portier,sebastien.tavenas\}@ens-lyon.fr}, École Normale Supérieure de Lyon}

\date{\today}

\maketitle

\begin{abstract}
  According to the real $\tau$-conjecture, the number of real roots of
  a sum of products of sparse univariate polynomials should be
  polynomially bounded in the size of such an expression. It is known
  that this conjecture implies a superpolynomial lower bound on the
  arithmetic circuit complexity of the permanent.
 
  In this paper, we use the Wronksian determinant to give an upper
  bound on the number of real roots of sums of products of sparse
  polynomials of a special form. We focus on the case where the number
  of distinct sparse polynomials is small, but each polynomial may be
  repeated several times. We also give a deterministic polynomial
  identity testing algorithm for the same class of polynomials.

  Our proof techniques are quite versatile; they can in particular be
  applied to some sparse geometric problems that do not originate from
  arithmetic circuit complexity.  The paper should therefore be of
  interest to researchers from these two communities (complexity
  theory and sparse polynomial systems).
\end{abstract}



\section{Introduction}

\setcounter{page}{1}

The complexity of the permanent polynomial 
$${\mathrm{per}}(x_{11},\ldots,x_{nn})=\sum_{\sigma \in S_n} 
\prod_{i=1}^n x_{i\sigma(i)}$$ is one of the central open problems in
complexity theory.  It is widely believed that the permanent is not
computable by arithmetic circuits of size polynomial in~$n$. This
problem can be viewed as an algebraic version of the P versus NP
problem~\cite{Valiant82,Burg}.

It is known that this much coveted lower bound for the permanent would
follow from a so-called {\em real $\tau$-conjecture} for sums of
products of sparse univariate polynomials~\cite{Koi10}.  Those are
polynomials in $\mathbb{R}[x]$ of the form $\sum_{i=1}^k \prod_{j=1}^m
f_{ij}(x)$, where the sparse polynomials $f_{ij}$ have at most $t$
monomials.  According to the real $\tau$-conjecture, the number of
real roots of such an expression should be polynomially bounded in
$k$, $m$ and $t$.  The original $\tau$-conjecture by Shub and
Smale~\cite{ShSm95} deals with integer roots of arbitrary
(constant-free) straight-line programs.

As a first step toward the real $\tau$-conjecture, Grenet, Koiran,
Portier and Strozecki~\cite{GKPS11} considered the family of sums of
products of powers of sparse polynomials.  Those polynomials are of
the form
\begin{equation}\label{spspower} 
  \sum_{i=1}^k \prod_{j=1}^m f_{j}^{\alpha_{i,j}}.
\end{equation}
They are best viewed as sums of products of sparse polynomials where
the total number $m$ of distinct sparse polynomials is ``small'', but
each polynomial may be repeated several times. In particular, if one
can find a $(kt)^{O(1)}2^{O(m)}$ upper bound on the number of real
roots, then it will imply the real-$\tau$ conjecture in the case where
the number of distinct sparse polynomials is logarithmically bounded.
The upper bound on the number of real roots obtained in~\cite{GKPS11}
is polynomial in $t$, but exponential in $m$ and doubly exponential in
$k$.

Bounds on the number of real zeros for systems of sparse polynomials
were extensively studied by Khovanski\u{\i}~\cite{Kh91} in his
``fewnomial theory''. His results imply an upper bound exponential in
$k$, $m$ and $t$. In this article, we will give a bound of order
$t^{O(k^2m)}$, thereby removing the double exponential
from~\cite{GKPS11} while staying polynomial in $t$. Moreover, our
results extend well to some other families of functions. In
particular, they extend a result from Avenda\~no~\cite{Ave09} on the
intersection of a sparse plane curve and a line. He gave a linear
bound on the number of roots for polynomials of the form $\sum_{i=1}^k
x^{\alpha_i}(ax+b)^{\beta_i}$ where $\alpha_i$ and $\beta_i$ are
integers and gave an example proving that his linear bound does not
apply for non-integer powers. Our result gives a polynomial upper
bound for the wider family~(\ref{spspower}) where the polynomials
$f_j$ are of bounded degrees and the $\alpha_{i,j}$ are real
exponents.

In addition to bounds on the number of real roots, we also give a
deterministic identity testing algorithm for polynomials of the
form~(\ref{spspower}). The running time of our algorithm is polynomial
in $t$, in the bit size of coefficients and of the powers
$(\alpha_{i,j})$ and exponential in $k$ and $m$.  Polynomial Identity
Testing (PIT) is a very well-studied problem.  The Schwartz-Zippel
lemma yields a randomized algorithm for PIT, but the existence of an
efficient deterministic algorithm is an outstanding open
problem. Connections between circuit lower bounds and deterministic
PIT algorithms were discovered in 1980 by Heintz and
Schnorr~\cite{HS80}, then more recently by Kabanets and
Impagliazzo~\cite{KI04}, by Aaronson and van Melkebeek~\cite{AM11} and
by Agrawal~\cite{Agra05}. Recently, many deterministic PIT algorithms
have been found for several restricted models (see e.g. the two
surveys~\cite{AS09,Sax09}). In particular, a deterministic PIT
algorithm for functions of the form of~(\ref{spspower}) has already
been given in~\cite{GKPS11}. Their algorithm is polynomial in $t$,
exponential in $m$ but doubly exponential in $k$ whereas we give a new
algorithm which is only exponential in $k$.

We now present our main technical tools. Finding the roots of a
product of polynomials is easy: it is the union of the roots of the
corresponding polynomials. But finding the roots of a sum is
difficult: for example how to bound the number of real roots of $fg+1$
where $f$ and $g$ are $t$-sparse? It is an open question to decide if
this bound is linear in $t$. Our main tool in this paper to tackle the
sum is the Wronskian. We recall that the Wronskian of a family of
functions $f_1,\ldots,f_k$ is the determinant of the matrix of their
derivatives of order 0 up to $k-1$. More formally,
$$ W(f_1,\ldots,f_k) = \det \left(\left( f_j^{(i-1)}\right)_{1\leq i,j \leq k}\right). $$
The Wronskian is useful especially for its connection to linear
independence (more on this in the next section).  Another classical
and very useful tool is Descartes' rule of signs:
\begin{lemma}[Strong rule of signs]\label{lem_strongDescartes}
  Let $f=\sum_{i=1}^t a_ix^{\alpha_i}$ be a polynomial such that
  $\alpha_1 < \alpha_2 < \ldots < \alpha_t$ and $a_i$ are nonzero real
  numbers. Let $N$ be the number of sign changes in the sequence
  $(a_1,\ldots,a_t)$. Then the number of positive real roots of $f$ is
  bounded by $N$.
\end{lemma}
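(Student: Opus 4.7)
The plan is to prove the bound by induction on the number of terms $t$, using Rolle's theorem on $(0,\infty)$ where each $x^{\alpha_i}$ is smooth with derivative $\alpha_i x^{\alpha_i-1}$. The base case $t=1$ is immediate, since $a_1 x^{\alpha_1}$ has no positive zeros and $N=0$.

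For the inductive step, I would first divide $f$ by $x^{\alpha_1}$, which preserves both the set of positive roots and the sign-change count, so without loss of generality $\alpha_1 = 0$. Then
\[
  f'(x) = \sum_{i=2}^t a_i \alpha_i \, x^{\alpha_i - 1}
\]
has $t-1$ terms, and since $\alpha_i > 0$ for $i \geq 2$, the sign of each coefficient $a_i \alpha_i$ equals that of $a_i$. Writing $N'$ for the number of sign changes in $(a_2,\ldots,a_t)$, one has $N = N'$ if $\mathrm{sign}(a_1) = \mathrm{sign}(a_2)$ and $N = N'+1$ otherwise. The inductive hypothesis applied to $f'$ gives it at most $N'$ positive roots.

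Let $p$ denote the number of positive roots of $f$. By Rolle's theorem there are at least $p-1$ roots of $f'$ strictly between them. When $\mathrm{sign}(a_1) \neq \mathrm{sign}(a_2)$ this immediately yields $p - 1 \leq N' = N - 1$, hence $p \leq N$. The delicate case is $\mathrm{sign}(a_1) = \mathrm{sign}(a_2)$, where the crude Rolle count only delivers $p \leq N+1$. To close the gap I would examine the behavior near $0^+$: since $\lim_{x\to 0^+} f(x) = a_1$ and $f'(x) \sim a_2 \alpha_2 \, x^{\alpha_2 - 1}$, both $f$ and $f'$ carry the sign of $a_1$ on a right-neighborhood of $0$. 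If $x_1$ denotes the smallest positive root of $f$, the mean value theorem applied on $[\varepsilon, x_1]$ for small $\varepsilon$ forces $f'$ to take the opposite sign somewhere in $(0,x_1)$, producing a root of $f'$ there that is distinct from the $p-1$ Rolle roots lying in $(x_1,x_p)$. Hence $f'$ has at least $p$ positive roots, so $p \leq N' = N$.

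The main obstacle is this extra-root argument in the equal-sign case; the only subtlety worth verifying is that the asymptotic $a_2 \alpha_2 x^{\alpha_2-1}$ really controls the sign of $f'$ near $0$, which it does as the lowest-exponent surviving term. If the statement is read with multiplicity, the same induction still goes through using that a root of multiplicity $k$ of $f$ is a root of multiplicity at least $k-1$ of $f'$.
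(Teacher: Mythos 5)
Your proof is correct, but it takes a genuinely different route from the paper: the paper states the strong rule of signs as a classical fact and never proves it (only the weak form, Lemma~\ref{lem_Descartes}, is used later), and the only Descartes-type argument it actually carries out is a byproduct of its Wronskian machinery in Section~\ref{Sec_optimality}, where Lemma~\ref{lem_Zff} with $f(x)=x$ combined with Theorem~\ref{thm_main} yields the bound $2k-1$ on the number of distinct real roots, the Wronskian proof of the strong rule being delegated to P\'olya and Szeg\H{o} (Part V, exercise 90). You instead give the classical elementary proof: induction on the number of terms, dividing by $x^{\alpha_1}$, differentiating, and combining Rolle's theorem with a sign analysis at $0^+$ to recover the lost root when $\mathrm{sign}(a_1)=\mathrm{sign}(a_2)$. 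That delicate step is sound: $f(0^+)=a_1$, $f'(x)\sim a_2\alpha_2 x^{\alpha_2-1}$ as $x\to 0^+$, the mean value theorem on $[\varepsilon,x_1]$ forces $f'$ to take the sign of $-a_1$ in $(0,x_1)$, and continuity of $f'$ then gives a zero there, distinct from the Rolle zeros lying beyond $x_1$. Two routine points you leave implicit and should state: the smallest positive root $x_1$ exists because $f(0^+)=a_1\neq 0$ keeps the zero set away from $0$ and that set is closed, and the extra zero of $f'$ comes from the intermediate value theorem. Your approach buys a self-contained proof of the sharp bound $N$, valid for arbitrary real exponents on $(0,\infty)$ and, with your closing multiplicity remark, with multiplicities; the paper's Wronskian route is less elementary and, as executed there, only delivers the weaker bound, but it is the one that generalizes to the sums of products of sparse polynomials which are the paper's real object of study.
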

In this article, we will use a weak form of this lemma.
\begin{lemma}[Weak rule of signs]\label{lem_Descartes}
  Let $f=\sum_{i=1}^t a_ix^{\alpha_i}$ be a polynomial such that
  $\alpha_1 < \alpha_2 < \ldots < \alpha_t$ and $a_i$ are nonzero real
  numbers. Then the number of positive real roots of $f$ is bounded by
  $t-1$. Moreover, the result is also true in the case where the
  exponents $\alpha_i$ are real.
\end{lemma}
In their book \cite{PoSz76}, P\'olya and Szeg\H{o} gave a
generalization of the strong rule of sign using the Wronskian. Some
relations were already known between Wronskians and the sparse
polynomials (c.f. for example~\cite{GKS91},~\cite{GK93}
and~\cite{GKS94}). We show in Theorem~\ref{Thm_heart} that bounding
the number of roots of the Wronskian yields a bound on the number of
roots of the corresponding sum. In general, the Wronskian may seem
more complicated than the sum of the functions, but for the families
studied in this paper it can be factorized more easily
(Theorems~\ref{Thm_model1} and~\ref{Thm_model2}).

The paper is organized as follows. The main results of
Section~\ref{Sec_upperbound} are Theorems~\ref{Thm_heart},
\ref{thm_main} and~\ref{thm_main3}, which bound the number of roots of
sums as a function of the number of roots of the Wronskian.  Then, in
Section~\ref{Sec_applications}, we apply these results to particular
families of polynomials. The main applications that we have in mind
are to polynomials of the form~(\ref{spspower}), and to the
polynomials studied by Avenda\~no.  We give in Section~\ref{Sec_PIT}
some PIT algorithms for polynomials of the form~(\ref{spspower}). The
proof of Theorem~\ref{thm_main3} will be given in
Section~\ref{Sec_upperbound_refinement}.  And finally, we show in
Section~\ref{Sec_optimality} that our method is optimal in a precise
sense. Some of the proofs are postponed to the Appendix.

\section{Zeros of the Wronskian as an upper
  bound} \label{Sec_upperbound}

Let us recall that for a finite family of real functions
$f_1,\ldots,f_k$ sufficiently differentiable, the Wronskian is defined
by
\begin{align*}
  W(f_1,\ldots,f_k) = \det \left( \left( f_j^{(i-1)}\right)_{1 \leq i,j \leq k}\right).
\end{align*}

We will use the following properties of the Wronskian.
\begin{lemma}
  Let $f_1,\ldots,f_k$ and $g$ be $k-1$ times
  differentiable real functions.
  
  Then, $W(gf_1,\ldots,gf_k)=g^kW(f_1,\ldots,f_k)$.
\end{lemma}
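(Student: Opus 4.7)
The plan is to prove the identity by exhibiting the Wronskian matrix of $(gf_1,\ldots,gf_k)$ as a product of a lower triangular matrix with the Wronskian matrix of $(f_1,\ldots,f_k)$, where the triangular factor has $g$ on every diagonal entry. By Leibniz's rule, the entry at row $i$ and column $j$ of the Wronskian matrix of $(gf_1,\ldots,gf_k)$ is
$$ (gf_j)^{(i-1)} = \sum_{\ell=0}^{i-1} \binom{i-1}{\ell} g^{(\ell)} f_j^{(i-1-\ell)}. $$
The key observation is that the right-hand side is a linear combination of entries from rows $1,2,\ldots,i$ of the Wronskian matrix of $(f_1,\ldots,f_k)$, with coefficients depending only on $i$ and not on $j$. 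This is precisely the condition for a left multiplication by a lower triangular matrix.

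Concretely, I would define the $k\times k$ lower triangular matrix $L$ by $L_{i,r} = \binom{i-1}{i-r}\, g^{(i-r)}$ for $r \leq i$ (and zero otherwise), and check by a direct index shift $\ell = i-r$ that the matrix identity $\bigl((gf_j)^{(i-1)}\bigr)_{i,j} = L \cdot \bigl(f_j^{(i-1)}\bigr)_{i,j}$ holds. Since $L_{i,i} = g$ for all $i$, the determinant $\det(L) = g^k$. Taking determinants of both sides of the factorization yields $W(gf_1,\ldots,gf_k) = g^k\, W(f_1,\ldots,f_k)$, as required.

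There is no real obstacle here; the only care needed is bookkeeping of the binomial coefficients and the index shift so that the Leibniz expansion matches the product $LM$ exactly. The statement requires only that $f_1,\ldots,f_k$ be $(k-1)$ times differentiable, and implicitly that $g$ is $(k-1)$ times differentiable so that the derivatives $g^{(\ell)}$ entering $L$ make sense; this hypothesis on $g$ should be added (or assumed from context) for the proof to go through.
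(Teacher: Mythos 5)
Your proof is correct: the factorization of the Wronskian matrix of $(gf_1,\ldots,gf_k)$ as $L\cdot\bigl(f_j^{(i-1)}\bigr)_{i,j}$ with $L_{i,r}=\binom{i-1}{i-r}g^{(i-r)}$ lower triangular is exactly the Leibniz-rule bookkeeping needed, the index shift checks out, and $\det L=g^k$ gives the identity. The paper itself does not prove this lemma but only cites P\'olya and Szeg\H{o} (ex.\ 57--58, Part 7), so your argument supplies a clean self-contained proof of the standard kind; your remark that $g$ must also be $(k-1)$ times differentiable is a fair observation about an implicit hypothesis, which is harmless in the paper since the lemma is only applied with $g$ built from the analytic (or sufficiently differentiable) functions already at hand.
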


As a corollary:
\begin{lemma}
  Let $f_1,\ldots,f_k$ be $k-1$ times differentiable real functions 
  and let $I$ be an interval where they do not vanish.
  
  Then, over $I$, we have $W(f_1, \ldots , f_k)=\left(f_1\right)^k
  W\left(\left(\frac{f_2}{f_1}\right)^\prime, \ldots ,
    \left(\frac{f_k}{f_1}\right)^\prime\right).$
\end{lemma}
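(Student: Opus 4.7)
The plan is to derive the identity by combining the preceding lemma with a cofactor expansion of a determinant that has a constant first column.

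First, I would exploit the non-vanishing hypothesis: on $I$, the function $1/f_1$ is well-defined and $k-1$ times differentiable. Applying the previous lemma with $g = 1/f_1$ yields
$$ W\!\left(1, \frac{f_2}{f_1}, \ldots, \frac{f_k}{f_1}\right) = \frac{1}{f_1^k}\, W(f_1, \ldots, f_k), $$
which rearranges to $W(f_1, \ldots, f_k) = f_1^k \cdot W\!\left(1, f_2/f_1, \ldots, f_k/f_1\right)$.

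Second, I would expand the Wronskian on the right along its first column. Since the constant function $1$ has all derivatives of positive order equal to zero, that column is $(1, 0, 0, \ldots, 0)^T$, so Laplace expansion collapses the $k\times k$ determinant into the $(k-1)\times(k-1)$ minor obtained by deleting the first row and the first column. In this minor, the $(i,j)$-entry for $1 \leq i,j \leq k-1$ is the $i$-th derivative of $f_{j+1}/f_1$, which equals the $(i-1)$-th derivative of $\left(f_{j+1}/f_1\right)'$. By definition of the Wronskian, this minor is exactly $W\!\left((f_2/f_1)', \ldots, (f_k/f_1)'\right)$, and combining with the first step gives the claimed identity.

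There is no real obstacle here: both steps are routine determinantal manipulations (factoring out $f_1$ via the previous lemma, and cofactor-expanding along a column whose entries are derivatives of a constant). The only subtlety is justifying division by $f_1$, which is precisely what the hypothesis of non-vanishing on $I$ provides; note moreover that $f_2, \ldots, f_k$ need not be assumed non-vanishing for this particular statement, since only $f_1$ is used as a denominator.
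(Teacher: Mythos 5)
Your proof is correct: applying the preceding lemma with $g = 1/f_1$ (well-defined and $k-1$ times differentiable since $f_1$ does not vanish on $I$) and then cofactor-expanding $W\!\left(1, f_2/f_1, \ldots, f_k/f_1\right)$ along the constant first column is exactly the standard derivation, which the paper does not spell out but delegates to P\'olya and Szeg\H{o} (Part 7, ex.\ 57--58). Your side remark that only $f_1$ needs to be non-vanishing is also accurate.
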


These results can be found in \cite{PoSz76} (ex. 57, 58 in Part 7).
Notice that the Wronskian of a linearly dependent family of functions
is identically zero (if a family is dependent then the family of the
derivatives is also dependent with the same coefficients).  But the
converse is not necessarily true. Peano, then B\^ocher, found
counterexamples \cite{Pea1889a,Pea1889b,Bo1900a} (see \cite{HistWr}
for a history of these results). However, B\^ocher \cite{Bo1900b}
proved that this converse becomes true if the functions are analytic
\cite{Hur90}.

\begin{lemma}\label{lem_zero} 
  If $f_1,\ldots,f_k$ are analytic functions, then $\left(f_i\right)$
  is linearly dependent if and only if $W(f_1,\ldots,f_k)=0$.
\end{lemma}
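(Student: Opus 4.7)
The easy direction is already noted in the paragraph preceding the lemma: linear dependence of $(f_1,\ldots,f_k)$ forces the columns of the Wronskian matrix to be linearly dependent, hence $W(f_1,\ldots,f_k)\equiv 0$; this does not require analyticity. So the whole task is the converse, and the plan is to prove it by induction on $k$, using the two lemmas already stated together with the identity theorem for real analytic functions.

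For the base case $k=1$ we have $W(f_1)=f_1$, so $W\equiv 0$ gives $f_1\equiv 0$, which is a (degenerate) linear dependence. For the inductive step, assume the statement for $k-1$ analytic functions and suppose $W(f_1,\ldots,f_k)\equiv 0$. If $f_1\equiv 0$ we are done, so we may assume $f_1\not\equiv 0$. Since $f_1$ is analytic, its zero set is discrete, and we can pick an open interval $I$ on which $f_1$ does not vanish. On $I$ the quotients $f_j/f_1$ are analytic, and the second lemma stated in the paper gives
$$0 \;=\; W(f_1,\ldots,f_k) \;=\; f_1^k\, W\!\left(\left(\tfrac{f_2}{f_1}\right)',\ldots,\left(\tfrac{f_k}{f_1}\right)'\right).$$
Because $f_1$ is nonvanishing on $I$, the Wronskian of the $k-1$ analytic functions $(f_j/f_1)'$ is identically zero on $I$. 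By the induction hypothesis applied on $I$, there exist constants $c_2,\ldots,c_k$, not all zero, with $\sum_{j=2}^k c_j (f_j/f_1)' \equiv 0$ on $I$. Integrating on each connected piece of $I$ produces a constant $c_1$ with $\sum_{j=2}^k c_j f_j \equiv c_1 f_1$ on $I$.

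Thus the analytic function $g := c_1 f_1 - \sum_{j=2}^k c_j f_j$ vanishes on the open interval $I$. By the identity theorem for real analytic functions, $g\equiv 0$ on the whole domain, which is a nontrivial linear relation among $f_1,\ldots,f_k$ (the $c_j$ for $j\ge 2$ are not all zero, so the relation is nontrivial regardless of the value of $c_1$). This completes the induction.

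The main obstacle is the passage from a local dependence on $I$ to a global one: this is exactly where analyticity is indispensable, as the classical Peano/B\^ocher counterexamples show that for merely $C^\infty$ functions the local constants obtained by integration on different connected components of $\{f_1\ne 0\}$ may be inconsistent. A minor bookkeeping point is that the induction should really be applied componentwise on $I$, but once one nontrivial relation is produced on a single open subinterval, the identity theorem immediately propagates it to the entire domain, so no further case analysis is needed.
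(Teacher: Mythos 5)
Your proof is correct, but there is no internal proof to compare it with: the paper does not prove Lemma~\ref{lem_zero} at all, it simply invokes B\^ocher's classical theorem that for analytic functions a vanishing Wronskian forces linear dependence, with references to the original literature. What you give is essentially the standard proof of that classical result, and it is sound: the easy direction via dependence of the rows/columns, then induction on $k$ using the quotient identity $W(f_1,\ldots,f_k)=f_1^{k}\,W\bigl((f_2/f_1)',\ldots,(f_k/f_1)'\bigr)$ on a subinterval $I$ where $f_1\neq 0$, integration over the connected interval $I$ of the relation supplied by the induction hypothesis, and the identity theorem to promote the local relation to a global one. You also pinpoint correctly that the identity theorem is exactly where analyticity enters and where the Peano/B\^ocher $C^\infty$ counterexamples are ruled out, which matches the discussion preceding the lemma; structurally, your reduction is the same device the paper uses in proving Theorem~\ref{Thm_heart}, so your argument would serve as a self-contained proof where the paper only has a citation.

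Two small points to tidy. The unnumbered quotient lemma in Section~\ref{Sec_upperbound} is stated under the hypothesis that \emph{all} of $f_1,\ldots,f_k$ are nonvanishing on $I$, whereas you only arrange $f_1\neq 0$; either note that the identity really needs only $f_1\neq 0$ (this is how the paper itself applies it inside the proof of Theorem~\ref{Thm_heart}), or first dispose of the case where some $f_j\equiv 0$ (immediate dependence) and shrink $I$ to avoid the discrete zero sets of the remaining $f_j$. Also, the globalization step tacitly assumes the common domain is a single interval, i.e.\ connected; this agrees with the paper's standing convention, but it is worth saying explicitly, since on a disconnected domain the ``only if'' direction of the lemma fails.
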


\begin{definition}
  For every function $g$ and interval $I$, we will denote $Z_I(g)$ the
  number of distinct real roots of $g$ over $I$. We just write $Z(g)$
  when the interval is clear from the context.
\end{definition}

Throughout the paper we consider only intervals that are not reduced
to a single point (and we allow unbounded intervals).  The next
theorem is in fact implied, in the analytic case, by Voorhoeve and Van
Der Poorten's result~\cite{VP75} (see below, Theorem~\ref{Thm_Voor}
and the following paragraph for more precisions). In
Theorem~\ref{Thm_heart} we only assume that the $f$ are sufficiently
differentiable.

\begin{theorem}\label{Thm_heart}
  Let $k$ be a non zero integer. Let $f_i$ be $k$ functions $k-1$
  times differentiable in an interval $I$ such that for all $i \leq
  k$, the Wronskian $W(f_1, \ldots, f_i)$ does not have any zero over
  $I$.

  If the real constants $a_1,\ldots,a_k$ are not all equal to 0,
  $a_1f_1+a_2f_2+\ldots + a_kf_k$ has at most $k-1$ real zeros over
  $I$ counted with multiplicity.
\end{theorem}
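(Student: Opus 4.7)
The natural approach is induction on $k$. The base case $k=1$ is immediate: the hypothesis says $W(f_1) = f_1$ has no zero on $I$, so $a_1 f_1$ (with $a_1 \neq 0$) has no zero at all, which is $\leq k-1 = 0$.

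For the induction step, let $g = a_1 f_1 + \cdots + a_k f_k$ with the $a_i$ not all zero. If $a_k = 0$, then $g$ is a nontrivial combination of $f_1,\ldots,f_{k-1}$, whose Wronskians $W(f_1,\ldots,f_i)$ ($i \leq k-1$) are nonvanishing by assumption, so the induction hypothesis gives at most $k-2 \leq k-1$ zeros. Assume then $a_k \neq 0$. Since $W(f_1) = f_1$ does not vanish on $I$, the function $h := g/f_1$ is well-defined on $I$ and has exactly the same zeros as $g$, with the same multiplicities. Differentiating,
\[
h' = a_2 \left(\frac{f_2}{f_1}\right)' + \cdots + a_k \left(\frac{f_k}{f_1}\right)',
\]
which is a nontrivial combination (because $a_k \neq 0$) of the $k-1$ functions $g_j := (f_{j+1}/f_1)'$ for $j=1,\ldots,k-1$.

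The plan is now to apply the induction hypothesis to $g_1,\ldots,g_{k-1}$. The second lemma from the excerpt gives, for every $2 \leq i \leq k$,
\[
W(f_1,\ldots,f_i) = f_1^i \, W\bigl((f_2/f_1)',\ldots,(f_i/f_1)'\bigr) = f_1^i \, W(g_1,\ldots,g_{i-1}).
\]
Since $f_1$ and $W(f_1,\ldots,f_i)$ are nonvanishing on $I$, so is $W(g_1,\ldots,g_{i-1})$ for every $i-1 \in \{1,\ldots,k-1\}$. The functions $g_j$ are $(k-2)$-times differentiable (as $f_1$ does not vanish and the $f_j$ are $(k-1)$-times differentiable), so the induction hypothesis applies and yields $Z_I(h') \leq k-2$ counted with multiplicity.

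It remains to transfer this bound from $h'$ back to $h$ via Rolle's theorem, taking multiplicities into account. If $h$ has distinct zeros $x_1 < \cdots < x_p$ in $I$ with multiplicities $m_1,\ldots,m_p$, then $h'$ vanishes at each $x_i$ with multiplicity $m_i - 1$ and, by Rolle, has at least one additional zero strictly between each consecutive pair $x_i, x_{i+1}$; counting everything gives $Z_I(h') \geq (m_1 + \cdots + m_p - p) + (p-1) = Z_I(h) - 1$. Combined with $Z_I(h') \leq k-2$ and $Z_I(g) = Z_I(h)$, this yields $Z_I(g) \leq k-1$, completing the induction. The only delicate points are the bookkeeping of multiplicities in the Rolle step and the verification that the reduced Wronskians $W(g_1,\ldots,g_{i-1})$ inherit the nonvanishing hypothesis from the original ones; both are handled cleanly by the factorization lemma already recalled in the excerpt.
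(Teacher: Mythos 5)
Your proof is correct and follows essentially the same route as the paper: induction on $k$, dividing by the nonvanishing $f_1$, applying Rolle, and using the factorization $W(f_1,\ldots,f_i)=f_1^i\,W\bigl((f_2/f_1)',\ldots,(f_i/f_1)'\bigr)$ to pass the nonvanishing hypothesis to the reduced family. Your handling of the degenerate case ($a_k=0$ instead of $a_2=\cdots=a_k=0$) and your explicit multiplicity bookkeeping in the Rolle step are only cosmetic refinements of the paper's argument.
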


\begin{proof}
  We show this result by induction on $k$.  If $k=1$, then,
  $f_1=W(f_1)$ does not have any zero. Moreover, $a_1$ is not
  zero. So, $a_1f_1$ has no zeros.

  For some $k\geq 2$, let us suppose that the property is true for all
  linear combinations of size $k-1$. Denote $z$ the number of zeros of
  $a_1f_1+\ldots + a_kf_k$. If $a_2=a_3=\ldots=a_k=0$, then $a_1\neq
  0$ and $a_1f_1+a_2f_2+\ldots + a_kf_k=a_1f_1$ has no zero, and the
  conclusion of the theorem holds true.  Otherwise,
  $a_1+\frac{a_2f_2}{f_1}+ \ldots + \frac{a_kf_k}{f_1}$ has $z$ zeros
  (since $f_1=W(f_1)$ does not have any zero by hypothesis). By
  application of Rolle's Theorem,
  $a_2\left(\frac{f_2}{f_1}\right)^\prime+\ldots+a_k\left(\frac{f_k}{f_1}\right)^\prime$
  has at least $z-1$ zeros over $I$.

  Function $f_1$ does not have any root in $I$, so the functions
  $\left(\frac{f_2}{f_1}\right)^\prime,\ldots,\left(\frac{f_k}{f_1}\right)^\prime$
  are $k-2$ times differentiable. Moreover, for all $2 \leq i \leq k$,
  $W \left(
    \left(f_2/f_1\right)^\prime,\ldots,\left(f_{i}/f_1\right)^\prime\right)
  = W(f_1, \ldots, f_i)/f_1^i$ does not have any roots over $I$.
  Since the coefficients $a_2,\ldots,a_k$ are not all zero, by
  induction hypothesis $a_2\left(\frac{f_2}{f_1}\right)^\prime+ \dots
  + a_k\left(\frac{f_k}{f_1}\right)^\prime$ has at most $k-2$
  zeros. Hence $ a_1f_1+a_2f_2+ \ldots+ a_kf_k $ has at most $k-1$
  zeros by Rolle's theorem.
\end{proof}

The following theorem gives us a method to find upper bounds on the
number of roots. We will show in Section~\ref{Sec_optimality} that it
is sometimes tight.

\begin{theorem}\label{thm_main}
  Let $f_1, \ldots, f_k$ be analytic functions on an interval $I$.  If
  the real constants $a_1,\ldots,a_k$ are not all equal to 0,
  \begin{align}\label{Eq1} 
    Z(a_1f_1+\ldots + a_kf_k) \leq \left(1+
      \underset{i=1}{\overset{k}{\sum}}Z(W(f_1, \ldots,
      f_i))\right)k-1.
  \end{align}
  More precisely, if $\Upsilon = \left\{x\in I | \exists i \leq k,
    W(f_1, \ldots, f_i)(x)=0 \right\}$ is finite, then
  \begin{align}\label{Eq2} 
    Z(a_1f_1+\ldots + a_kf_k) \leq \left(1+ |\Upsilon|\right)k-1.
  \end{align}
  Moreover, the inequalities still hold if on the left side, zeros
  which are not zero of one of the Wronskians $W(f_1, \ldots, f_i)$
  are counted with multiplicity.
\end{theorem}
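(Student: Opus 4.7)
The plan is to reduce Theorem~\ref{thm_main} to Theorem~\ref{Thm_heart} by excising the finitely many ``bad'' points of $\Upsilon$ and applying Theorem~\ref{Thm_heart} on each of the resulting subintervals. First I would establish (\ref{Eq2}) assuming $\Upsilon$ is finite; the main inequality (\ref{Eq1}) will then follow by the trivial observation $|\Upsilon| \leq \sum_{i=1}^k Z(W(f_1,\ldots,f_i))$, which makes (\ref{Eq1}) vacuous when $\Upsilon$ is infinite (in that case some $W(f_1,\ldots,f_i)$ either has infinitely many zeros or is identically zero, so the right-hand side of (\ref{Eq1}) is infinite).

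For (\ref{Eq2}), set $n=|\Upsilon|$ and note that removing these $n$ points from $I$ partitions $I \setminus \Upsilon$ into at most $n+1$ open subintervals $J_1,\ldots,J_{n+1}$. On each $J_\ell$, by construction, none of the Wronskians $W(f_1,\ldots,f_i)$ for $1 \leq i \leq k$ has a zero, and since the $f_i$ are analytic they are in particular $k-1$ times differentiable there. Theorem~\ref{Thm_heart} therefore applies on each $J_\ell$ and shows that $a_1 f_1 + \cdots + a_k f_k$ has at most $k-1$ zeros on $J_\ell$, counted with multiplicity. Summing over the $\ell$ subintervals yields at most $(n+1)(k-1)$ zeros in $I \setminus \Upsilon$, and adding the at most $n$ zeros lying in $\Upsilon$ itself (counted each with multiplicity one) gives the total bound
\[
(n+1)(k-1) + n = (n+1)k - 1,
\]
which is exactly (\ref{Eq2}).

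The ``moreover'' statement is automatic from this counting: zeros that are not in $\Upsilon$ belong to some $J_\ell$ and are therefore counted with multiplicity by the application of Theorem~\ref{Thm_heart}, while the zeros that happen to coincide with points of $\Upsilon$ are absorbed into the $n$ extra units, each counted once. Finally, (\ref{Eq1}) is obtained from (\ref{Eq2}) via the union bound $|\Upsilon| \leq \sum_{i=1}^k Z(W(f_1,\ldots,f_i))$ when the right-hand side is finite, and holds trivially otherwise.

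The only real point of care is the bookkeeping between multiplicity and mere incidence: one must verify that no zero in $I\setminus\Upsilon$ is double-counted and that each point of $\Upsilon$ contributes at most $1$ to the total, which is the reason the ``moreover'' clause specifically restricts the multiplicity counting to zeros outside $\Upsilon$. The remaining technicalities — that $I \setminus \Upsilon$ is indeed a disjoint union of at most $n+1$ intervals satisfying the hypotheses of Theorem~\ref{Thm_heart}, and that analyticity of the $f_i$ guarantees the same for each $W(f_1,\ldots,f_i)$ so that a non-identically-zero Wronskian has only isolated zeros — are routine.
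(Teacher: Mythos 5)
Your proposal is correct and follows essentially the same route as the paper: split $I\setminus\Upsilon$ into at most $|\Upsilon|+1$ subintervals, apply Theorem~\ref{Thm_heart} on each to get $k-1$ zeros (with multiplicity) per subinterval, add back the at most $|\Upsilon|$ points of $\Upsilon$ counted once, and deduce~(\ref{Eq1}) from~(\ref{Eq2}) via $|\Upsilon|\leq\sum_i Z(W(f_1,\ldots,f_i))$, the infinite-$\Upsilon$ case being trivial since the right-hand side is then infinite (the paper phrases this last point through the identically-zero-sum case, but the content is the same).
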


\begin{proof} 
  We will directly prove the more precise
    version~(\ref{Eq2}).
  If $a_1f_1+\ldots+a_kf_k$ is the zero polynomial, then the family is
  linearly dependent and so the Wronskian $W(f_1,\ldots,f_k)$ is also
  the zero polynomial.  This means that $\Upsilon = I$ is infinite and
  the inequality is verified.

  Otherwise, $a_1f_1+\ldots+a_kf_k$ has a finite number of zeros. We
  have $\Upsilon = \underset{i=1}{\overset{k}{\bigcup}}Z(W(f_1,
  \ldots, f_i)) $. So, $|\Upsilon| \leq
  \underset{i=1}{\overset{k}{\sum}}|Z(W(f_1, \ldots, f_i))|$ and we
  will prove~(\ref{Eq2}). The set $I \setminus \Upsilon$ is an union
  of $|\Upsilon| + 1$ intervals. Let $J$ be one of these
  intervals. With Theorem~\ref{Thm_heart}, we get
  $Z_J(a_1f_1+\ldots+a_kf_k) \leq k-1$. So $a_1f_1+\ldots+a_kf_k$ has
  at most $\left(1+ |\Upsilon|\right)(k-1)$ zeros over $I \setminus
  \Upsilon$ and at most $\left(1+ |\Upsilon|\right)(k-1)+|\Upsilon|$
  zeros over $I$.
\end{proof}

In Section~\ref{Sec_upperbound_refinement}, we will prove the
following variation on Theorem~\ref{thm_main}:

\begin{theorem}\label{thm_main3} 
  Let $f_1, \ldots, f_k$ be analytic linearly independent functions on
  an interval $I$. Then,
  \begin{align*}
    Z(f_1+\ldots + f_k) \leq k-1 +
    Z(W_k)+Z(W_{k-1})+2\sum_{j=1}^{k-2}Z(W_j).
  \end{align*}
\end{theorem}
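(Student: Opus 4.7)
My plan is to set up a telescoping recursion via an auxiliary one-step lemma combined with the classical Jacobi identity for Wronskians. Define, for $0 \le r \le k - 1$,
\[
S_r \;=\; \sum_{j=r+1}^{k} W(f_1,\ldots,f_r,f_j),
\]
where the Wronskian of no arguments is taken to be $1$. Then $S_0 = F$ and $S_{k-1} = W_k$. The theorem will drop out of iterating the one-step inequality $Z(S_r) \le Z(W_r) + Z(W_{r+1}) + Z(S_{r+1}) + 1$.

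The first ingredient I will prove is an auxiliary lemma: for any sum $P = p_1 + \cdots + p_m$ of analytic functions on $I$,
\[
Z_I(P) \;\le\; Z_I(W(p_1,P)) + Z_I(p_1) + 1.
\]
The argument cuts $I$ along the zeros of $p_1$ and, on each of the at most $Z_I(p_1)+1$ resulting subintervals, applies Rolle's theorem to $P/p_1$, using $(P/p_1)' = W(p_1,P)/p_1^2$ to transfer zeros to $W(p_1,P)$. The refinement over the bookkeeping of Theorem~\ref{thm_main} comes from the observation that any $x$ with $p_1(x)=0$ and $P(x)=0$ satisfies $W(p_1,P)(x) = p_1(x)P'(x) - p_1'(x)P(x) = 0$, so $Z(P) \cap Z(p_1) \subseteq Z(W(p_1,P)) \cap Z(p_1)$. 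This inclusion lets the zeros of $P$ sitting on $Z(p_1)$ be absorbed into $Z_I(W(p_1,P))$ without double-counting.

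The second ingredient is the classical Jacobi identity,
\[
W\!\left(W(f_1,\ldots,f_r,g),\; W(f_1,\ldots,f_r,h)\right) \;=\; W_r \cdot W(f_1,\ldots,f_r,g,h).
\]
Expanding $W(W_{r+1},S_r)$ term by term and using $W(W_{r+1},W_{r+1})=0$ yields the clean identity $W(W_{r+1},S_r) = W_r \cdot S_{r+1}$. Plugging $P = S_r$ and $p_1 = W_{r+1}$ into the auxiliary lemma, together with $Z(W_r \cdot S_{r+1}) \le Z(W_r) + Z(S_{r+1})$, then delivers the desired one-step recursion.

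Iterating from $r = 0$ through $r = k-2$, using $S_{k-1} = W_k$ and $Z(W_0) = 0$ (since $W_0 \equiv 1$), the telescoping sum collapses to $Z(F) \le (k-1) + Z(W_k) + Z(W_{k-1}) + 2\sum_{j=1}^{k-2} Z(W_j)$, as claimed. I expect the main technical hurdle to be the auxiliary lemma --- specifically the set-theoretic inclusion $Z(P) \cap Z(p_1) \subseteq Z(W(p_1,P))$ --- since this is exactly what breaks the wasteful multiplicative factor of $k$ in front of $\sum Z(W_j)$ in Theorem~\ref{thm_main} and replaces it by a linear telescoping pattern with coefficient at most $2$.
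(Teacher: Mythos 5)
Your argument is correct, and it reaches exactly the paper's bound, but it is worth seeing how it relates to the proof given in Section~\ref{Sec_upperbound_refinement}. By linearity of the Wronskian in its last column, your auxiliary functions satisfy $S_r=\sum_{j=r+1}^k W(f_1,\ldots,f_r,f_j)=W(f_1,\ldots,f_r,F)$ with $F=f_1+\cdots+f_k$, and these are precisely the functions $R_r$ of Lemma~\ref{lem_Frobenius}: indeed $W(W_{r+1},S_r)=W_{r+1}^2\left(S_r/W_{r+1}\right)^\prime$, so your Jacobi identity $W(W_{r+1},S_r)=W_r\,S_{r+1}$ is exactly the Frobenius recursion $R_{r+1}=\frac{W_{r+1}^2}{W_r}\left(\frac{R_r}{W_{r+1}}\right)^\prime$ used in the paper. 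Where the two proofs genuinely differ is the zero-counting step. The paper runs an induction that tracks multiplicities of zeros and poles of $R_r/W_{r+1}$ through a four-case analysis ($Z_i^+$, $Z_i^=$, $Z_i^{->}$, $Z_i^{-0}$); you instead isolate a clean one-step lemma $Z_I(P)\leq Z_I(W(p_1,P))+Z_I(p_1)+1$, proved by cutting $I$ at the zeros of $p_1$, applying Rolle to $P/p_1$ on each piece, and absorbing the zeros of $P$ lying on $Z(p_1)$ via the inclusion $Z(P)\cap Z(p_1)\subseteq Z(W(p_1,P))$ (valid since $W(p_1,P)=p_1P^\prime-p_1^\prime P$). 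This buys a shorter, multiplicity-free bookkeeping and a reusable lemma, at the price of invoking the classical Jacobi/Pl\"ucker identity for Wronskians (equivalent to Lemma~\ref{lem_Frobenius}, so nothing is lost). Telescoping over $r=0,\ldots,k-2$ with $S_{k-1}=W_k$ and $Z(W_0)=0$ gives the stated bound.

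Two small points you should make explicit. First, the recursion is only informative if no $S_r$ vanishes identically: since the $f_i$ are linearly independent, $F\notin\mathrm{span}(f_1,\ldots,f_r)$ for $r\leq k-1$, so $S_r=W(f_1,\ldots,f_r,F)\not\equiv 0$ by Lemma~\ref{lem_zero}, and likewise $W_{r+1}\not\equiv 0$. Second, your auxiliary lemma implicitly assumes $Z_I(p_1)$ is finite; if some $Z(W_j)$ is infinite the theorem is vacuously true, so you may assume all these counts finite, and then analyticity guarantees the cutting argument goes through. With these remarks added, the proof is complete.
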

In most applications, this result yields a better bound than
Theorems~\ref{Thm_heart} and~\ref{thm_main}.

\section{Applications}\label{Sec_applications}

In this section, we prove Theorem~\ref{Thm_model1} which bounds the
number of zeros of the polynomials of the form (\ref{spspower}). The
given bound improves both Grenet, Koiran, Portier and Strozecki's
result~\cite{GKPS11} and the bound implied by Khovanski\u{\i}'s
fewnomial theory~\cite{Kh91}.  At the end of this section, we also
extend Avenda\~no's result to real exponents.  We saw before (in
Section~\ref{Sec_upperbound}) that the number of zeros of a linear
combination of real functions can be bounded as a function of the
number of zeros of their Wronskians.  As a result, it remains to bound
the number of zeros of Wronskians of polynomials of the form
$\prod_{j=1}^m f_{j}^{\alpha_{i,j}}$.  Such a Wronskian has few zeros
thanks to a nice factorization property: after factoring out some high
powers, we are left with a determinant whose entries are low-degree
polynomials (or sparse polynomials, depending on the model under
consideration).  It is then straightforward to bound the number of
real roots of this determinant.

\subsection{Derivative of a power}

We use ultimately vanishing sequences of integer numbers, i.e.,
infinite sequences of integers which have only finitely many nonzero
elements. We denote this set $\mathbb{N}^{(\mathbb{N})}$.  For any
positive integer $p$, let $\mathscr{S}_p = \{ (s_1,s_2,\ldots) \in
\mathbb{N}^{(\mathbb{N})} | \ \underset{i=1}{\overset{\infty}{\sum}} i
s_i = p \}$ (so for each $p$, this set is finite). Then if $s$ is in
$\mathscr{S}_p$, we observe that for all $i\geq p+1$, we have
$s_i=0$. Moreover for any $p$ and any $s=(s_1,s_2,\ldots) \in
\mathbb{N}^{(\mathbb{N})}$, we will denote $|s| =
\underset{i=1}{\overset{\infty}{\sum}} s_i$ (the sum makes sense
because it is finite).

\begin{lemma} \label{lem_power} Let $p$ be a positive integer and
  $\alpha \geq p$ be a real number.  Then \[\left(f^\alpha
  \right)^{(p)} = \sum_{s \in \mathscr{S}_{p}} \left[ \beta_{\alpha,s}
    f^{\alpha-|s|} \prod_{k=1}^p \left( f^{(k)} \right)^{s_k}
  \right] \] where $(\beta_{\alpha,s})$ are some constants.
\end{lemma}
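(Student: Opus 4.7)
The plan is to prove the lemma by induction on $p$, using nothing more than the product rule and careful bookkeeping on the multi-indices $s \in \mathscr{S}_p$. The base case $p=1$ is immediate: $\mathscr{S}_1 = \{(1,0,0,\ldots)\}$ and $(f^\alpha)' = \alpha f^{\alpha-1} f'$, so one takes $\beta_{\alpha,(1,0,\ldots)} = \alpha$. Note also that for any $s \in \mathscr{S}_p$ one has $|s| = \sum_i s_i \leq \sum_i i\, s_i = p \leq \alpha$, so each exponent $\alpha - |s|$ that appears is nonnegative and all manipulations are legitimate.

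For the inductive step, assume the formula for $p$ (with $\alpha \geq p+1$) and differentiate each term on the right-hand side via the product rule:
\[ \frac{d}{dx}\!\left[f^{\alpha-|s|}\prod_{k=1}^{p}(f^{(k)})^{s_k}\right] = (\alpha-|s|)\, f^{\alpha-|s|-1} f' \prod_{k=1}^{p}(f^{(k)})^{s_k} + \sum_{k=1}^{p} s_k\, f^{\alpha-|s|}\, (f^{(k)})^{s_k-1} f^{(k+1)} \prod_{\substack{j=1 \\ j \neq k}}^{p}(f^{(j)})^{s_j}. \]
Each resulting term should be re-indexed by a new multi-index in $\mathscr{S}_{p+1}$. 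For the first summand, define $s' $ by $s'_1 = s_1+1$ and $s'_j = s_j$ for $j \geq 2$: then $|s'| = |s|+1$ and $\sum j\, s'_j = p+1$, so $s' \in \mathscr{S}_{p+1}$, and the term is $(\alpha-|s|) \, f^{\alpha-|s'|}\prod_k (f^{(k)})^{s'_k}$. For the $k$-th summand, define $s^{(k)}$ by $s^{(k)}_k = s_k-1$, $s^{(k)}_{k+1} = s_{k+1}+1$, other entries unchanged: then $|s^{(k)}| = |s|$ and $\sum j\, s^{(k)}_j = p+1$, so again $s^{(k)} \in \mathscr{S}_{p+1}$ and the term has the required shape $s_k\, f^{\alpha-|s^{(k)}|} \prod_j (f^{(j)})^{s^{(k)}_j}$. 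Summing over the original $s$ and collecting the contributions landing on each multi-index $s' \in \mathscr{S}_{p+1}$ defines the new coefficient $\beta_{\alpha, s'}$, completing the induction.

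The only real obstacle is the index bookkeeping: one must verify that the product-rule expansion never introduces a factor $f^{(k)}$ with $k > p+1$ (it introduces at worst $f^{(p+1)}$ via the term coming from $k=p$, so the product range $\prod_{k=1}^{p+1}$ suffices at the next step), and that in each case the exponent of $f$ equals exactly $\alpha - |s'|$ for the new multi-index. Both checks are immediate from the two displays above. No special argument is needed to handle non-integer $\alpha$: the hypothesis $\alpha \geq p$ is used only to guarantee that $f^{\alpha-|s|}$ remains a genuine $(p{-}|s|)$-times differentiable function on the domain under consideration, so the formal computation is also analytically valid.
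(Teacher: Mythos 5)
Your proof is correct and follows essentially the same route as the paper's: induction on $p$, product rule applied to each term, with the two kinds of contributions re-indexed via $s\mapsto s+e_1$ and $s\mapsto s-e_k+e_{k+1}$ into $\mathscr{S}_{p+1}$, and the new coefficients $\beta_{\alpha,s'}$ obtained by collecting terms. No further comment is needed.
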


The order of differentiation of a monomial $\prod_{k=1}^p
(f^{(k)})^{s_k}$ is $\sum_{k=1}^p ks_k$. The order of differentiation of
a differential polynomial is the maximal order of its monomials. For
example: if $f$ is a function, the total order of differentiation of
$f^3\left(f^\prime\right)^2\left(f^{(4)}\right)^3+3ff^\prime$ is $\max
(3*0+2*1+3*4,0*1+1*1)=14$.

Lemma~\ref{lem_power} just means that the $p$-th derivative of a power
$\alpha$ of a function $f$ is a linear combination of terms such that
each term is a product of derivatives of $f$ of total degree $\alpha$
and of total order of differentiation $p$. 

\begin{proof}
  In the following, $e_i$ is the sequence
  $(0,0,\ldots,0,1,0,0,\ldots)$ where the $1$ appears at the
  $i^{\rm{th}}$ coordinate.  We show this lemma by induction over $p$.
  If $p=1$, then $\left( f^\alpha \right)^\prime = \alpha f^\prime
  f^{\alpha -1}$. That is the basis case since
  $\mathscr{S}_1=\{(1,0,0,\ldots)\}$. We notice that
  $\beta_{\alpha,(1,0,\ldots)}=\alpha$.

  Let us suppose that the lemma is true for a fixed $p$. By induction
  hypothesis, we have
  \begin{align*}
    \left( f^\alpha \right)^{(p+1)}
    & = \left(\sum_{s \in \mathscr{S}_{p}} \beta_{\alpha,s} f^{\alpha-|s|} \prod_{k=1}^p \left( f^{(k)} \right)^{s_k} \right)^\prime \\
    & = g_1 + g_2
  \end{align*}
  where
  \begin{align*}
    g_1= \sum_{s \in \mathscr{S}_{p}} \beta_{\alpha,s} \left(f^{\alpha-|s|} \right)^\prime \left( \prod_{k=1}^p \left( f^{(k)} \right)^{s_k} \right) \\
    g_2= \sum_{s \in \mathscr{S}_{p}} \beta_{\alpha,s} f^{\alpha-|s|}
    \left(\prod_{k=1}^p \left( f^{(k)} \right)^{s_k} \right)^\prime.
  \end{align*}
  By rewriting each term, we get
  \begin{align*}
    g_1 & = \sum_{s \in \mathscr{S}_{p}} \beta_{\alpha,s} (\alpha-|s|) f^\prime f^{\alpha-|s|-1} \prod_{k=1}^p \left( f^{(k)} \right)^{s_k} \\
    & = \sum_{\underset{s^\prime = s+e_1}{s \in \mathscr{S}_{p}}}
    \beta_{\alpha,s} (\alpha-|s^\prime|+1) f^{\alpha-|s^\prime|}
    \prod_{k=1}^p \left( f^{(k)} \right)^{s^\prime_k}
  \end{align*}
  \begin{align*}
    g_2 & =  \sum_{s \in \mathscr{S}_{p}} \beta_{\alpha,s}
    f^{\alpha-|s|} \sum_{j=1}^p s_j f^{(j+1)} \left( f^{(j)}
    \right)^{s_j-1} 
    \prod_{k \neq j} \left( f^{(k)} \right)^{s_k} \\
    & = \sum_{j=1}^p \left[ \sum_{\underset{s^\prime =
          s-e_j+e_{j+1}}{\underset {\: s_j \neq 0}{s \in
            \mathscr{S}_{p}}}} \beta_{\alpha,s^\prime+e_j-e_{j+1}}
      f^{\alpha-|s^{\prime}|} (s_j^{\prime} +1) \prod_{k=1}^{p+1}
      \left( f^{(k)} \right)^{s_k^{\prime}} \right].
  \end{align*}
  
  If $s$ is in $\mathscr{S}_{p}$, then $s+e_1 \in \mathscr{S}_{p+1}$
  and if moreover $s_j\neq 0$ then $s-e_j+e_{j+1}\in
  \mathscr{S}_{p+1}$. So the result is proved and the constants
  $\beta$ are defined by: $\beta_{\alpha,(1,0,0,\ldots)}=\alpha$; and
  if $s\in \mathcal{S}_p$ with $p>1$, then
  \begin{multline*}
    \beta_{\alpha,s}=\mathbb{1}_{s_1\neq0}(\alpha-|s|+1)\beta_{\alpha,(s_1-1,s_2,s_3,\ldots)}\\
    +\underset{s_j\neq0}{\sum_{2\leq j\leq
        p}}(s_{j-1}+1)\beta_{\alpha,(s_1,\ldots,s_{j-1},s_{j-1}+1,s_j-1,s_{j+1},\ldots)}.
  \end{multline*}
\end{proof}

\subsection{Several models}

In \cite{GKPS11}, the authors gave an $t^{O(m2^k})$ bound on the
number of distinct real roots of polynomials of the form $f=
\underset{i=1}{\overset{k}{\sum}}a_i\underset{j=1}{\overset{m}{\prod}}f_j^{\alpha_{i,j}}$,
where the $f_j$ are polynomials with at most $t$ monomials.  We
improve their result in Theorem~\ref{Thm_model1} using our results on
the Wronskian from Section~\ref{Sec_upperbound}.

\begin{lemma}\label{lem_monomials}
  Let $M$ be a set of $T$ monomials and $f_1,\ldots,f_s$ be
  polynomials whose monomials are in $M$.  For every formal monomial
  $P$ in the $s^2$ variables $f_1,f_1^\prime,\ldots
  f_1^{s-1},f_2,f_2^\prime,\ldots,f_s^{s-1}$ of degree $d$ and of
  order of differentiation $e$, the number of monomials in $x$ of
  $P(f_1,f_1^\prime,\ldots,f_s^{s-1})(x)$ is bounded by
  $\binom{d+T-1}{T-1}$. More precisely, the set of these monomials is
  included in a set $E_{d,e}$ of size at most $\binom{d+T-1}{T-1}$
  which does not depend on $P$.
\end{lemma}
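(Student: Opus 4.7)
My plan is to make the exponents explicit and reduce the counting to a stars-and-bars argument. I will write $M = \{x^{\gamma_1}, \ldots, x^{\gamma_T}\}$ where $\gamma_1, \ldots, \gamma_T$ are nonnegative integers, so that each polynomial $f_j$ has the form $f_j = \sum_{i=1}^T c_{j,i}\, x^{\gamma_i}$ (with some $c_{j,i}$ possibly zero). Differentiating term by term gives
\[
f_j^{(k)}(x) = \sum_{i=1}^T c_{j,i}\, \gamma_i(\gamma_i-1)\cdots(\gamma_i-k+1)\, x^{\gamma_i - k},
\]
so every monomial in $x$ appearing in $f_j^{(k)}$ lies in the set $\{x^{\gamma_i - k} : 1 \le i \le T\}$ (terms with $\gamma_i < k$ contribute $0$ and may be dropped).

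Next I will expand the formal monomial $P$. By hypothesis, $P$ is a product of $d$ factors of the form $f_{j_\ell}^{(k_\ell)}$, counted with multiplicity, with $\sum_{\ell=1}^d k_\ell = e$. When I distribute the product, every monomial in $x$ arising in $P(f_1, f_1', \ldots, f_s^{(s-1)})(x)$ is obtained by choosing, for each factor $\ell \in \{1, \ldots, d\}$, some index $i_\ell \in \{1, \ldots, T\}$. The resulting exponent of $x$ is
\[
\sum_{\ell=1}^d (\gamma_{i_\ell} - k_\ell) \;=\; \Bigl(\sum_{\ell=1}^d \gamma_{i_\ell}\Bigr) - e.
\]
The key observation is that this exponent depends on $(i_1, \ldots, i_d)$ only through their unordered multiset, and on $(k_1, \ldots, k_d)$ only through their sum $e$; in particular, it does not depend on the specific shape of $P$ among formal monomials of the same bidegree $(d,e)$.

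I will therefore define
\[
E_{d,e} \;=\; \bigl\{\, x^{\gamma_{i_1} + \cdots + \gamma_{i_d} - e} \;:\; \{i_1, \ldots, i_d\} \text{ is a multiset of size } d \text{ from } \{1, \ldots, T\} \,\bigr\},
\]
which manifestly does not depend on $P$. The preceding paragraph shows that every monomial in $x$ appearing in $P(f_1, f_1', \ldots, f_s^{(s-1)})(x)$ lies in $E_{d,e}$, and $|E_{d,e}|$ is bounded by the number of size-$d$ multisets of $\{1, \ldots, T\}$, which equals $\binom{d+T-1}{T-1}$ by the standard stars-and-bars count.

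I expect no real obstacle: the argument is essentially bookkeeping. The only point that needs a moment's care is handling indices with $\gamma_i < k_\ell$, where the falling-factorial coefficient is zero and the contribution vanishes; since we only need an upper bound, keeping these (vanishing) terms inside $E_{d,e}$ does no harm. The slight subtlety that distinct multisets $\{i_1, \ldots, i_d\}$ may yield the same exponent only makes the true count smaller, which again is consistent with the stated upper bound.
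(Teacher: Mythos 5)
Your proof is correct and is essentially the paper's own argument: your set $E_{d,e}$ is exactly the paper's $x^{-e}M^d$ (where $M^d$ is the set of products of $d$ monomials of $M$), and the bound comes from the same multiset/stars-and-bars count $\binom{d+T-1}{T-1}$. You have merely spelled out the expansion step that the paper leaves as ``easy to see.''
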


\begin{proof}
  Let $M^d$ be the set of monomials which are the product of $d$ not
  necessarily distinct monomials of $M$. The cardinal of this set is
  bounded by the cardinal of the set of multisets of size $d$ of
  elements in $M$, that is $\binom{T+d-1}{T-1}$. It is easy to see
  that we can take the set $E_{d,e}$ defined as the set of monomials
  of $x^{-e}M^d$. Its cardinal is bounded by the cardinality of $M^d$.
\end{proof}

\begin{theorem} \label{Thm_model1} Let $f=
  \underset{i=1}{\overset{k}{\sum}}a_i\underset{j=1}{\overset{m}{\prod}}f_j^{\alpha_{i,j}}$
  be a non identically zero function such that each $f_j$ is
  a polynomial with at most $t$ monomials and such that
  $a_i\in\mathbb{R}$ and $\alpha_{i,j}\in\mathbb{N}$.  Then,
  $Z_{\mathbb{R}}(f) \leq
  4ktm+4\left(\e(1+t)\right)^{\frac{mk^2}{2}}=O(t^{\frac{mk^2}2})$.

  Moreover, if $I$ is a real interval such that for all $j$, $f_j(I)
  \subseteq ]0,+\infty[$ (which ensures $f$ is defined on $I$), then
  the result is still true for real (possibly negative) powers
  $\alpha_{i,j}$, i.e., $Z_I(f) \leq
  4ktm+4\left(\e(1+t)\right)^{\frac{mk^2}{2}}$.
\end{theorem}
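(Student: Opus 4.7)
The plan is to write $f = \sum_{i=1}^{k} a_i F_i$ with $F_i = \prod_{j=1}^{m} f_j^{\alpha_{i,j}}$, and invoke Theorem~\ref{thm_main3} to reduce the problem to bounding the number of zeros of the $k$ Wronskians $W_s := W(F_1,\ldots,F_s)$ for $1 \leq s \leq k$. Everything then hinges on a good upper bound for $Z_I(W_s)$.

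The central claim is a factorization valid on any subinterval $J$ of $I$ on which the $f_\ell$ do not vanish:
\[
W_s \;=\; \Bigl(\prod_{\ell=1}^{m} f_\ell^{A_\ell}\Bigr) \cdot D_s,
\qquad D_s \;=\; \det(K_{i,j})_{1\leq i,j\leq s},
\qquad K_{i,j} \;=\; \frac{F_j^{(i-1)}}{\prod_\ell f_\ell^{\alpha_{j,\ell}-(i-1)}},
\]
obtained by pulling $\prod_\ell f_\ell^{\alpha_{j,\ell}-(s-1)}$ out of column $j$ and then $\prod_\ell f_\ell^{s-i}$ out of row $i$ (so $A_\ell = \sum_{j=1}^s \alpha_{j,\ell} - \binom{s}{2}$). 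Combining Lemma~\ref{lem_power} with Leibniz's rule, a direct computation shows that each $K_{i,j}$ is a polynomial in the formal variables $f_\ell, f_\ell',\ldots,f_\ell^{(i-1)}$ of total degree exactly $m(i-1)$ and of total order of differentiation $i-1$; the large exponents $\alpha_{j,\ell}$ enter only as multiplicative scalar coefficients. Therefore, putting $M = \bigcup_j \mathrm{mon}(f_j)$ (so $T := |M| \leq mt$), Lemma~\ref{lem_monomials} places the monomials in $x$ of $K_{i,j}$ inside $E_{m(i-1),i-1} = x^{-(i-1)} M^{m(i-1)}$. Expanding the determinant row by row, the monomials of $D_s$ lie in $x^{-\binom{s}{2}} M^{m\binom{s}{2}}$, a set of cardinality at most $\binom{mt + m\binom{s}{2} - 1}{mt-1}$.

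Weak Descartes (Lemma~\ref{lem_Descartes}) now bounds $Z_{(0,\infty)}(D_s)$ by this multinomial minus one, and a routine estimate of the form
\[
\binom{mt + m\binom{s}{2} - 1}{m\binom{s}{2}} \;\leq\; \Bigl(\tfrac{\e(t+\binom{s}{2})}{\binom{s}{2}}\Bigr)^{m\binom{s}{2}} \;\leq\; (\e(1+t))^{m\binom{s}{2}} \;\leq\; (\e(1+t))^{mk^2/2}
\]
supplies the target bound. On the interval $I$ of the second statement (where $f_\ell>0$), $\prod_\ell f_\ell^{A_\ell}$ is non-vanishing so $Z_I(W_s) = Z_I(D_s)$, and plugging this into Theorem~\ref{thm_main3}, together with the geometric-type estimate $\sum_{s=1}^{k} (\e(1+t))^{m\binom{s}{2}} \leq 2(\e(1+t))^{mk^2/2}$, yields the stated bound. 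For the $\mathbb{R}$-wide statement, I would run the same argument separately on $(0,\infty)$ and on $(-\infty,0)$ (substituting $y=-x$ to apply Descartes on the negative half-line), treat $x=0$ by hand, and absorb the zeros of the $f_\ell$ themselves using $Z_{\mathbb{R}}(W_s) \leq Z_{\mathbb{R}}(\prod_\ell f_\ell^{A_\ell}) + Z_{\mathbb{R}}(D_s)$; these extra terms, once run through Theorem~\ref{thm_main3}'s factor of~$2$ and summed over $s=1,\ldots,k$, account for the $4ktm$ correction.

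The main obstacle is the factorization: showing that $K_{i,j}$ has total polynomial degree only $m(i-1)$ and total order $i-1$ in the differentiation variables — and crucially not something growing with the exponents $\alpha_{j,\ell}$ — is exactly what turns the bound from exponential into polynomial in~$t$. A secondary technicality is that when some $\alpha_{j,\ell}$ is smaller than $s-1$ the factor $\prod_\ell f_\ell^{\alpha_{j,\ell}-(s-1)}$ is not a polynomial; this is finessed by reading the identity rationally on any subinterval where no $f_\ell$ vanishes, while counting the zeros of $\prod_\ell f_\ell$ separately into the $4ktm$ term.
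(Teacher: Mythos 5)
Your proposal follows essentially the same route as the paper's proof: the paper likewise reduces via Theorem~\ref{thm_main3} to bounding the Wronskians of the products, factors each Wronskian by pulling $\prod_j f_j^{\alpha_{u,j}+N}$ out of column $u$ and $\prod_j f_j^{k-v+1}$ out of row $v$ (it first multiplies $f$ by $\prod_j f_j^{N+k}$ so that Lemma~\ref{lem_power} applies verbatim and the factorization is polynomial, whereas you read the same identity rationally away from the zeros of the $f_\ell$ and count those zeros separately -- equivalent bookkeeping, since the paper's $4ktm$ term also comes from $\sum_j Z(f_j)$), and then bounds the residual determinant exactly as you do, via Lemma~\ref{lem_monomials} and the weak Descartes rule. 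The one step you should add is the paper's reduction to a linearly independent subfamily of the products $\prod_j f_j^{\alpha_{i,j}}$ before invoking Theorem~\ref{thm_main3}: without it the hypothesis of that theorem can fail and the intermediate Wronskians (hence your $D_s$) may vanish identically, e.g.\ when two products coincide.
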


\begin{proof}
  Let $N$ an integer such that for all $i$ and $j$, we have
  $\alpha_{i,j}+N >0$.  Let us consider $\tilde{f}= \sum_{i=1}^k a_i
  g_i$ where $g_i = \prod_{j=1}^m f_j^{\alpha_{i,j}+N+k}$. Note that
  $\tilde{f}=f \cdot \prod_{j=1}^m f_j^{N+k}$. We are going to bound
  the number of zeros of $\tilde{f}$.  In both cases (whether
  $\alpha_{i,j}$ are integer or real numbers), the functions $g_i$ are
  analytic in $I$. Furthermore, we can assume without loss of
  generality that the family $(g_i)$ is linearly independent. Indeed,
  if it is not the case, we can consider a basis of the family $(g_i)$
  and write $\tilde{f}$ in this basis. Then we can suppose that all
  $a_i$ are non-zero, otherwise, we remove these terms from the
  sum. We want to bound the number of zeros of $W(g_1, \ldots , g_s)$
  for all $s \leq k$ to conclude with Theorem~\ref{thm_main3}. We know
  that for $1 \le u,v \le s$
  \begin{align}
    g_u^{(v-1)} & = \sum_{\underset{r_1+ \ldots + r_m = v-1}{r_1,r_2,
        \ldots, r_m}} \prod_{j=1}^m \left( f_j^{\alpha_{u,j}+N+k}
    \right)^{(r_j)}.
  \end{align}
  We use now Lemma~\ref{lem_power} and we simplify the notation by
  writing $\beta_{u,j,s}$ instead of $\beta_{\alpha_{u,j}+N+k,s}$.
  \begin{align} \label{eqn_factor}
    g_u^{(v-1)} & = \sum_{\underset{r_1+ \ldots + r_m = v-1}{r_1,r_2, \ldots, r_m}} \prod_{j=1}^m \left[ \sum_{s \in \mathscr{S}_{r_j}} \beta_{u,j,s} f_j^{\alpha_{u,j}+N+k-|s|} \prod_{k=1}^{r_j} \left( f_j^{(k)} \right)^{s_k} \right] \\
    & = \left( \prod_{j=1}^m f_j^{\alpha_{u,j}+N} \right) \left(
      \prod_{j=1}^m f_j^{k-v+1} \right) T_{u,v}\left(
      (f_p^{(q-1)})_{1\leq p,q \leq s}\right). \nonumber
  \end{align}
  with : \[ T_{u,v}\left( (f_p^{(q-1)})_{1\leq p,q \leq s}\right) =
  \sum_{\underset{r_1+ \ldots + r_m = v-1}{r_1,r_2, \ldots, r_m}}
  \prod_{j=1}^m \left[ \sum_{s \in \mathscr{S}_{r_j}} \beta_{u,j,s}
    f_j^{v-1-|s|} \prod_{k=1}^{r_j} \left( f_j^{(k)} \right)^{s_k}
  \right]. \] The polynomial $T_{u,v}$ is homogeneous of total degree
  $(v-1)m$ with respect to the $s^2$ variables $\left( f_p^{(q-1)}
  \right)_{1 \leq p,q \leq s}$ and each of its terms is of
  differentiation order $v-1$.
  
  Then, we notice that, in~(\ref{eqn_factor}), the first parenthesis
  does not depend on $v$ and the second one on $u$. We get
  \begin{multline*}
    W(g_1, \ldots , g_s) = \\
    \left( \prod_{i=1}^s \prod_{j=1}^m f_j^{\alpha_{i,j}+N+k-i+1}
    \right) \det \left( \left(T_{u,v}\left( (f_p^{(q-1)})_{1\leq p,q
            \leq s}\right) \right)_{u,v \leq s}\right).
  \end{multline*}
  Hence,
  \begin{align} \label{Borne_zeros} & Z(W(g_1, \ldots , g_s)) \leq
    \left( \sum_{j=1}^m Z(f_j) \right) + Z\left(\det
      \left(T_{u,v}\left( (f_p^{(q-1)})_{1\leq p,q \leq s}\right)
      \right)\right).
  \end{align}

  We are now going to bound the number of monomials in $x$ of
  $\rm{det}(T_{u,v})$. We saw that $T_{u,v}$ is a homogeneous
  polynomial of degree $(v-1)m$ with respect to the $s^2$ variables
  $\left( f_p^{(q-1)} \right)_{1 \leq p,q \leq s}$ and of order of
  differentiation $v-1$. Moreover, as the family $(g_i)$ is linearly
  independent and as these functions are analytic, the Wronskian is
  not identically zero (Lemma~\ref{lem_zero}).  So $\rm{det}(T_{u,v})$
  is a linear combination, with respect to the variables $\left(
    f_p^{(q-1)} \right)_{1 \leq p,q \leq s}$, of monomials of degree
  exactly $\underset{v=1}{\overset{s}{\sum}}(v-1)m=m\binom{s}{2}$ and
  of order of differentiation $\binom{s}{2}$. By
  Lemma~\ref{lem_monomials}, the monomials in $x$ of each term of
  $\rm{det}(T_{u,v})$ are in the set
  $E_{\binom{s}{2}m,\binom{s}{2}}$. Consequently, the number of
  monomials in $x$ of $\rm{det}(T_{u,v})$ is bounded by the cardinal
  of $E_{\binom{s}{2}m,\binom{s}{2}}$, i$.$e$.$ by
  $\binom{m\binom{s}{2}+mt-1}{mt-1}$. Descartes' rule of signs
  (Lemma~\ref{lem_Descartes}) gives
\begin{align} \label{for_Upsilon}
  Z\left(\underset{u,v\leq s}{\rm{det}}\left(T_{u,v}\right)\right) \leq 2\binom{m\binom{s}{2}+mt-1}{mt-1}-1.
\end{align}

    We have now all the tools to prove the theorem. We have:
    \begin{align*}
      Z(f) & \leq Z(\sum_{i=1}^ka_ig_i).
\end{align*}
By Theorem~\ref{thm_main3}:
\begin{align*}      
  Z(f) & \leq k-1 + 2\sum_{s=1}^{k}Z(W(g_1,\ldots,g_s)).
\end{align*}
Using  formula~(\ref{Borne_zeros}):
\begin{align*}      
  Z(f) & \leq k-1+2k\left( \sum_{j=1}^m Z(f_j) \right) +
  2\sum_{s=1}^kZ\left(\underset{u,v \leq s}{\det} \left( T_{u,v}\left(
        (f_p^{(q-1)})_{p,q \leq s}\right) \right)\right).
\end{align*}
By Descartes' rule, $\sum_{j=1}^m Z(f_j) \leq (2t-1)m$.  We can then
apply~(\ref{for_Upsilon}) to obtain the inequality
\begin{align*}
  Z(f) & \leq
  k-1+2k(2t-1)m+2\sum_{s=1}^k\left(2\binom{m\binom{s}{2}+mt-1}{mt-1}-1\right).
\end{align*}
Finally, we use the well known bound: $\binom{n}{k}\leq \left(\e n/k\right)^k$
\begin{align*}
  Z(f) & \leq k-1+4ktm-2km-2k+4+4\sum_{s=2}^k\left(\e\left(1+\frac{mt-1}{m\binom{s}{2}}\right)\right)^{m\binom{s}{2}}\\
  & \leq 4ktm+4\left(\e(1+t)\right)^{\frac{mk^2}{2}}.
  \end{align*}
\end{proof}

Using polynomials of small degrees instead of sparse polynomials, the
same argument gives a polynomial bound.

\begin{theorem} \label{Thm_model2} Let $f = \sum_{i=1}^k a_i
  \prod_{j=1}^m f_j^{\alpha_{i,j}}$ where $f$ is not null, the $f_j$
  are of degrees bounded by $d$ and such that the $a_i$ are reals and
  the $\alpha_{i,j}$ are integers.  Then, $Z_{\mathbb{R}}(f) \leq
  \frac{1}{3}k^3md +2kmd+k \sim \frac{k^3md}{3}$

  Moreover, if $I$ is a real interval such that for all $j$, $f_j(I)
  \subseteq \mathbb{R}^{+*\star}$ (which ensures $f$ is defined on
  $I$), then the result is always true for real powers $\alpha_{i,j}$,
  i$.$e$.$ $Z_I(f) \leq \frac{1}{3}k^3md +2kmd+k\sim \frac{k^3md}{3}$.
\end{theorem}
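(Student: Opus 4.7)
The plan is to follow the proof of Theorem~\ref{Thm_model1} essentially verbatim, but to replace the combinatorial sparse-monomial count provided by Lemma~\ref{lem_monomials} (which yielded an exponential dependence in $k$ and $m$) by a straightforward degree bound, available now that the $f_j$ have bounded total degree rather than bounded sparsity.

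First I would reuse the same reduction: set up $\tilde{f} = \sum_{i=1}^k a_i g_i$ with $g_i = \prod_{j=1}^m f_j^{\alpha_{i,j}+N+k}$, where $N$ is chosen large enough that every exponent appearing remains non-negative throughout the computation. We may assume that the $g_i$ are linearly independent and that all $a_i$ are non-zero. The same factorization~(\ref{eqn_factor}) then applies and gives
\[
W(g_1,\ldots,g_s) = \Bigl(\prod_{i=1}^s\prod_{j=1}^m f_j^{\alpha_{i,j}+N+k-i+1}\Bigr)\cdot\det\bigl(T_{u,v}\bigl((f_p^{(q-1)})_{1\le p,q\le s}\bigr)\bigr),
\]
with $T_{u,v}$ homogeneous of total degree $(v-1)m$ in the variables $f_p^{(q-1)}$ and of differentiation order $v-1$. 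Nothing in this step uses the sparsity of $f_j$, and Lemma~\ref{lem_power} is already stated for real exponents $\alpha\ge p$, so the argument applies uniformly to the integer case and to the real-exponent case (in the latter, on an interval $I$ where every $f_j$ is positive).

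The one new ingredient is a transparent degree count. Since $\deg f_j \le d$, every derivative $f_j^{(q-1)}$ has degree at most $d$, and since each monomial of $T_{u,v}$ is a product of $(v-1)m$ such derivatives, its $x$-degree is at most $(v-1)md$. Hence $\det(T_{u,v})$ has degree at most $md\binom{s}{2}$ in $x$, so that $Z(\det(T_{u,v})) \le md\binom{s}{2}$. Combined with $Z(f_j)\le d$ and the factorization above, this yields
\[
Z(W(g_1,\ldots,g_s)) \le md + md\binom{s}{2}.
\]

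Finally I would insert this into Theorem~\ref{thm_main3}:
\[
Z(f) \le k-1 + 2k\cdot md + 2md\sum_{s=1}^k\binom{s}{2} = k-1 + 2kmd + 2md\binom{k+1}{3},
\]
and the identity $2\binom{k+1}{3} = (k^3-k)/3$ immediately gives $Z(f) \le \tfrac{1}{3}k^3md + 2kmd + k$, as claimed. I do not expect any real obstacle: the only mildly delicate point, already handled in Theorem~\ref{Thm_model1}, is the choice of $N$ ensuring that Lemma~\ref{lem_power} applies to every power appearing in~(\ref{eqn_factor}), together with the reduction to a linearly independent subfamily. The substantive work is just swapping an exponential combinatorial bound on monomial support for a linear bound on polynomial degree.
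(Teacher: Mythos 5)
Your proposal is correct and follows essentially the same route as the paper: the paper's proof of Theorem~\ref{Thm_model2} likewise reuses the factorization and the matrix $\left(T_{u,v}\right)$ from the proof of Theorem~\ref{Thm_model1}, replaces the sparse-monomial count by the degree bound $md\binom{s}{2}$ for $\det\left(T_{u,v}\right)$ (and $md$ for $\sum_j Z(f_j)$), and concludes via Theorem~\ref{thm_main3}. Your arithmetic, including the use of $\sum_{s=1}^{k}\binom{s}{2}=\binom{k+1}{3}$, indeed yields the stated bound $\frac{1}{3}k^3md+2kmd+k$.
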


\begin{proof}
  In the proof of Theorem~\ref{Thm_model1}, we saw that $\det \left(
    T_{u,v}\right)$ is a homogeneous polynomial of degree $m
  \binom{s}{2}$ in the $s^2$ variables
  $f_1,f_1^\prime,\ldots,f_s^{(s-1)}$. So, it is of degree $md
  \binom{s}{2}$ in the variable $x$. Moreover, as the family $(g_i)$
  is linearly independent and as these functions are analytic, the
  Wronskian is not identically zero (Lemma~\ref{lem_zero}).  In
  Equation~(\ref{Borne_zeros}), the first term is bounded by $md$ and
  the second one by $md \binom{s}{2}$.  By
  Theorem~\ref{thm_main3}\footnote{Using Theorem~\ref{thm_main}
    instead of Theorem~\ref{thm_main3} would multiply our upper bound
    by a $O(k)$ factor.}, the number of zeros of $\sum_{i=1}^k a_ig_i$
  is bounded by $\left( \frac{1}{3}k^3md +2kmd+k\right)$.
\end{proof}

Avenda\~no studied the case $f= \sum_{i=1}^k
x^{\alpha_i}(ax+b)^{\beta_i}$ where $\alpha_i$ and $\beta_i$ are
integers \cite{Ave09}. He found an upper bound linear in $k$ for the
number of roots. But he showed also that his bound is false in the
case of real powers. We find here a polynomial bound which works also
for real powers.

\begin{corollary}\label{Cor_Avendano}
  Let $f= \sum_{i=1}^k c_ix^{\alpha_i}(ax+b)^{\beta_i}$. Let $I$ be
  the interval $\{x \in \mathbb{R} | x>0 \wedge ax+b > 0\}$.  Then
  $Z_I(f)\leq \frac{2}{3}k^3+5k$.
\end{corollary}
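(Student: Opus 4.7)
The plan is to recognize this corollary as a direct instance of Theorem~\ref{Thm_model2} with two base polynomials. First I would set $m = 2$ with $f_1(x) = x$ and $f_2(x) = ax+b$, both of degree $d = 1$, and treat the $c_i$ as the scalar coefficients $a_i$ of Theorem~\ref{Thm_model2}. The exponents $\alpha_i$ and $\beta_i$ play the role of $\alpha_{i,1}$ and $\alpha_{i,2}$. This exactly fits the form $f = \sum_{i=1}^k a_i \prod_{j=1}^m f_j^{\alpha_{i,j}}$.

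Next I would verify the positivity hypothesis that allows the ``real powers'' version of Theorem~\ref{Thm_model2} to apply. On the interval $I = \{x \in \mathbb{R} \mid x > 0 \wedge ax+b > 0\}$, both $f_1$ and $f_2$ are strictly positive by construction, so $f_j(I) \subseteq \mathbb{R}^{+*}$ for $j=1,2$. This takes care of both the integer-exponent case stated in the corollary and, as a bonus, shows the bound holds for arbitrary real exponents.

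Finally I would substitute $m = 2$ and $d = 1$ into the explicit bound of Theorem~\ref{Thm_model2}, namely $\tfrac{1}{3}k^3 m d + 2kmd + k$, which yields
\[
Z_I(f) \leq \tfrac{2}{3}k^3 + 4k + k = \tfrac{2}{3}k^3 + 5k = O(k^3).
\]
Since this is a direct specialization of a theorem already proved, there is no real obstacle; the only thing to be careful about is to note that Theorem~\ref{Thm_model2} is invoked in its real-exponent form, which in fact lets one strengthen the statement to arbitrary real $\alpha_i, \beta_i$ (thereby recovering Avenda\~no's claim that no such bound exists on the full real line by observing that the restriction to $I$ is essential).
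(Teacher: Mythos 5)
Your proposal is correct and follows exactly the paper's intended route: the corollary is a direct specialization of Theorem~\ref{Thm_model2} with $m=2$, $f_1(x)=x$, $f_2(x)=ax+b$, $d=1$, using the real-exponent version since both factors are positive on $I$, which gives the bound $\tfrac{2}{3}k^3+5k=O(k^3)$.
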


Li, Rojas and Wang \cite{LRW03}(Lemma 2) showed that polynomials $f =
\sum_{i=1}^k a_i \prod_{j=1}^m (c_jX+d_j)^{\alpha_{i,j}}$ where
coefficients $a_i$, $b_j$, $c_j$ and exponents $\alpha_{i,j}$ are
real, have at more $m+m^2+\ldots+m^{k-1}$ zeros. Our result improves
this bound:

\begin{corollary}
  Let $f= \sum_{i=1}^k a_i \prod_{j=1}^m (c_jx+d_j)^{\alpha_{i,j}}$
  where the coefficients $a_i$,$c_j$, $d_j$ and the exponents
  $\alpha_{i,j}$ are real numbers. On the interval $I=\{x \in
  \mathbb{R} | \forall j, c_jx+d_j > 0\}$
we have $Z_I(f) = O(mk^3)$.
\end{corollary}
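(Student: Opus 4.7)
The plan is to invoke Theorem~\ref{Thm_model2} directly with the choice $f_j(x) = c_j x + d_j$ for $j = 1,\ldots,m$. Each $f_j$ is a polynomial of degree at most $1$, so we can take $d = 1$ in the statement of that theorem. By the very definition of the interval $I$, every factor $c_j x + d_j$ is strictly positive on $I$, i.e.\ $f_j(I) \subseteq \mathbb{R}^{+\ast}$, which is exactly the hypothesis that allows Theorem~\ref{Thm_model2} to be used with real (possibly non-integer, possibly negative) exponents $\alpha_{i,j}$ and guarantees that $f$ is well defined and analytic on $I$. Assuming $f$ is not identically zero (otherwise the claim about $Z_I(f)$ is vacuous), Theorem~\ref{Thm_model2} yields
\[
Z_I(f) \;\leq\; \tfrac{1}{3}k^3 m d + 2 k m d + k \;=\; \tfrac{1}{3}k^3 m + 2 k m + k \;=\; O(m k^3),
\]
which is the desired bound.

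The only minor point to address is the degenerate case where some coefficient $c_j$ vanishes: in that case $c_j x + d_j \equiv d_j$ is a positive constant on $I$, and the factor $(c_j x + d_j)^{\alpha_{i,j}} = d_j^{\alpha_{i,j}}$ is itself a positive constant that can be absorbed into the scalar $a_i$. After this harmless rewriting, the polynomial still has the form required by Theorem~\ref{Thm_model2}, with the constant factors counting as degree-$0 \leq d = 1$ polynomials. There is no real obstacle here: the corollary is essentially an instantiation of the preceding theorem in the special case of linear $f_j$'s, the gain over Li--Rojas--Wang being that Theorem~\ref{Thm_model2} provides a polynomial bound in $k$ and $m$ (rather than exponential in $k$) and, thanks to the positivity hypothesis on $I$, permits arbitrary real exponents.
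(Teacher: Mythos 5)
Your proposal is correct and is exactly the intended argument: the corollary is a direct instantiation of Theorem~\ref{Thm_model2} with $f_j(x)=c_jx+d_j$, $d=1$, and the positivity of each factor on $I$ supplying the hypothesis needed for real exponents, giving $Z_I(f)\leq \tfrac13 k^3m+2km+k=O(mk^3)$. The remark about constant factors when some $c_j=0$ is a harmless extra precaution and does not change anything.
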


Another corollary was suggested to us by Maurice
Rojas. In~\cite{LRW03}, Li, Rojas and Wang bound, when it is finite,
the number of intersection between a trinomial curve and a $t$-sparse
curve by $2^t-2$. We improve here their result. The main idea is to
make a change of variables and reduce to the case where $f$ is
affine. This may introduce rational exponents, even if the original
system has integer coefficients only.

\begin{corollary}\label{Cor_trinomials}
  Let $f$ be a non-zero bivariate trinomial and $g$ be a bivariate $t$-sparse
  polynomial. Then the number of positive intersections between these two
  curves is infinite or bounded by $\frac{2}{3}t^3+5t$.
  
  Furthermore, the result still holds if the coefficients are real.
\end{corollary}

\begin{proof}
  Let
  $f(X,Y)=c_1X^{\gamma_1}Y^{\delta_1}+c_2X^{\gamma_2}Y^{\delta_2}+c_3X^{\gamma_3}Y^{\delta_3}$
  (with $c_3\neq 0$) and
  $g(X,Y)=\sum_{i=1}^ta_iX^{\alpha_i}Y^{\beta_i}$. On
  $\left(\mathbb{R}^{+\star}\right)^2$, the zeros of $f$ are the same
  than the zeros of
  $c_1+c_2X^{\gamma_2-\gamma_1}Y^{\delta_2-\delta_1}+c_3X^{\gamma_3-\gamma_1}Y^{\delta_3-\delta_1}$. Then
  we can and will assume that $\gamma_1=\delta_1=0$.
  \begin{itemize}
  \item First case: there exists $r\in\mathbb{R}^\star$ such that
    $(\gamma_3,\delta_3)=r\cdot
    (\gamma_2,\delta_2)$. In this case, we can put
    $A=X^{\gamma_2}Y^{\delta_2}$, then on $\mathbb{R}^{+\star}$:
    \begin{align*}
      f=0 & \Leftrightarrow c_1+c_2A+c_3A^r =0.
    \end{align*}
    By Descartes' rule of signs (Lemma~\ref{lem_Descartes}): the last
    equation has at most two real positive solutions which will be
    denoted $s_1$ and $s_2$. So the system is equivalent to the
    following:
    \begin{align*}
      \begin{cases}
        Y=\left(\frac{s_1}{X^{\gamma_2}}\right)^{\frac{1}{\delta_2}}
        \textrm{ or
        }Y=\left(\frac{s_2}{X^{\gamma_2}}\right)^{\frac{1}{\delta_2}}
        \\
        g(X,Y)=0.
      \end{cases}
    \end{align*}
    We obtain
    \begin{align*}
      g(X,Y)=\sum_{i=1}^ta_i s_1^{\frac{\beta_i}{\delta_2}}
      X^{\alpha_i-\frac{\beta_i\gamma_2}{\delta_2}} \textrm{ or } \sum_{i=1}^ta_i s_2^{\frac{\beta_i}{\delta_2}}
      X^{\alpha_i-\frac{\beta_i\gamma_2}{\delta_2}} .
    \end{align*}
    Again by Lemma~\ref{lem_Descartes}, the number of positive
    solutions is infinite or bounded by $2t-2\leq \frac{2}{3}t^3+5t$.
  \item Second case: the family $\left((\gamma_2,\delta_2),
      (\gamma_3,\delta_3)\right)$ is linearly independent. We can
    define $A=X^{\gamma_2}Y^{\delta_2}$ and
    $B=X^{\gamma_3}Y^{\delta_3}$, then on $\mathbb{R}^{+\star}$:
    \begin{align*}
      f=0 & \Leftrightarrow c_1+c_2A+c_3B =0\\
      & \Leftrightarrow   B =-\frac{c_1}{c_3}-\frac{c_2}{c_3}A.
    \end{align*}
    Let us define $\Delta=\det
    \begin{vmatrix}
      \gamma_2 & \gamma_3 \\
      \delta_2 & \delta_3
    \end{vmatrix} \neq 0
    $. Then
    $X=A^{\frac{\delta_3}{\Delta}}B^{-\frac{\delta_2}{\Delta}}$
    and
    $Y=A^{-\frac{\gamma_3}{\Delta}}B^{\frac{\gamma_2}{\Delta}}$. Hence,
    \begin{align*}
      g(X,Y)=0 \Leftrightarrow \sum_{i=1}^t a_i
      A^{\frac{\alpha_i\delta_3-\beta_i\gamma_3}{\Delta}}B^{\frac{-\alpha_i\delta_2+\beta_i\gamma_2}{\Delta}}=0.
    \end{align*} The number of solutions corresponds to the number of
    roots of the polynomial: 
    \begin{align*}
      \sum_{i=1}^t a_i
      A^{\frac{\alpha_i\delta_3-\beta_i\gamma_3}{\Delta}}\left(-\frac{c_1}{c_3}-\frac{c_2}{c_3}A\right)^{\frac{-\alpha_i\delta_2+\beta_i\gamma_2}{\Delta}}=0.
    \end{align*} By Corollary~\ref{Cor_Avendano}, the number of positive
    roots is infinite (if the polynomial is identically zero) or
    bounded by $\frac{2}{3}t^3+5t$.
  \end{itemize}
\end{proof}

\section{Some Algorithms for Polynomial Identity
  Testing}\label{Sec_PIT}

A PIT algorithm takes a polynomial as input, and decides whether the
polynomial is identically equal to zero. There are two classical forms
for these algorithms: blackbox and whitebox. For the first one, the
input is given by a blackbox. And in the second case, the input is
given by a circuit. These two types of algorithms are not comparable
in our case since, if we have a circuit, we cannot always evaluate it
efficiently on an input because the circuit may be of high degree.

\subsection{Blackbox PIT algorithms}

The bounds on real roots of Theorem~\ref{Thm_model1} immediately give
a blackbox PIT algorithm for some families of polynomials.

\begin{corollary}\label{cor_blackbox1}
  Let $f=
  \underset{i=1}{\overset{k}{\sum}}a_i\underset{j=1}{\overset{m}{\prod}}f_j^{\alpha_{i,j}}$
  be a function such that $f_j$ is a polynomial with at most $t$
  monomials and such that $a_i\in\mathbb{R}$ and
  $\alpha_{i,j}\in\mathbb{N}$.  Then, there is a blackbox PIT
  algorithm which makes only
  $1+4ktm+4\left(e(1+t)\right)^{\frac{mk^2}2}$ queries.
\end{corollary}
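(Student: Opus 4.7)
The plan is to invoke Theorem~\ref{Thm_model1} directly: it provides, for any non-identically-zero $f$ of the prescribed shape, the explicit upper bound
\[
B := 4ktm + 4\bigl(\e(1+t)\bigr)^{mk^2/2}
\]
on the number of distinct real zeros of $f$ over $\mathbb{R}$. Since the exponents $\alpha_{i,j}$ are non-negative integers, $f$ is a genuine polynomial in $\mathbb{R}[x]$ and is therefore defined on all of $\mathbb{R}$, so we may evaluate the blackbox at any real number.

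The algorithm I would propose is the obvious one: fix any set $S = \{x_0, x_1, \ldots, x_B\}$ of $B+1$ pairwise distinct real numbers (for concreteness $x_i := i+1$ for $i = 0, \ldots, B$), query the blackbox at each point of $S$, and return ``$f \equiv 0$'' if and only if all $B+1$ returned values vanish. Correctness is immediate: if some query value is nonzero, clearly $f \not\equiv 0$; conversely, if all $B+1$ queries return $0$, then $f$ has at least $B+1$ distinct real roots, which by Theorem~\ref{Thm_model1} forces $f \equiv 0$. The query count is exactly $B+1 = 1 + 4ktm + 4(\e(1+t))^{mk^2/2}$ as claimed.

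There is essentially no technical obstacle: the entire content of the corollary is the translation ``root bound $\Rightarrow$ blackbox hitting set $\{x_0,\ldots,x_B\}$''. The one point worth noting in writing this up is that the set $S$ can be chosen a priori, independently of the particular instance (the parameters $k$, $m$, $t$ determine $B$, and then any $B+1$ distinct reals work), so this is indeed a legitimate blackbox PIT algorithm rather than merely a test tailored to a specific input. Since the corollary makes no claim about the bit-complexity of the queries, we need not worry about the size of the $x_i$ (this will be addressed in the whitebox section).
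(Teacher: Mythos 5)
Your proposal is correct and matches the paper's own proof: the paper likewise queries the first $1+4ktm+4\left(\e(1+t)\right)^{\frac{mk^2}{2}}$ positive integers and uses the root bound of Theorem~\ref{Thm_model1} to argue this set is a hitting set. Nothing further is needed.
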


\begin{proof}
  We consider the algorithm which tests if the polynomial outputs zero
  on the $1+4ktm+4\left(e(1+t)\right)^{\frac{mk^2}2}$ first
  integers. By Theorem~\ref{Thm_model1}, this set is a hitting set,
  that is, if the polynomial is not zero, then at least one of these
  integers will not be a root of the polynomial.
\end{proof}

\begin{corollary}\label{cor_blackbox2}
  Let $f = \sum_{i=1}^k a_i \prod_{j=1}^m f_j^{\alpha_{i,j}}$ where
  each $f_j$ is of degree bounded by $d$ and such that the $a_i$ are reals
  and the $\alpha_{i,j}$ are integers.  Then, there is a blackbox PIT
  algorithm which makes only $1+\frac13k^3md+2kmd+k$ queries.
\end{corollary}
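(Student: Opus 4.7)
The plan is to mirror the short argument used for Corollary~\ref{cor_blackbox1}, substituting Theorem~\ref{Thm_model2} for Theorem~\ref{Thm_model1}. Since the exponents $\alpha_{i,j}$ are integers and each $f_j$ is a polynomial in $\mathbb{R}[x]$, the function $f$ is itself a genuine element of $\mathbb{R}[x]$ defined on all of $\mathbb{R}$, so we may freely evaluate it at any real number without worrying about the domain issue that forced us to restrict to an interval $I$ in the ``real powers'' part of Theorem~\ref{Thm_model2}.

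First I would invoke Theorem~\ref{Thm_model2} in its integer-exponent form: if $f$ is not identically zero, then
\[
Z_{\mathbb{R}}(f)\;\leq\;\tfrac{1}{3}k^3 m d+2kmd+k.
\]
Let $N := 1+\tfrac13 k^3md+2kmd+k$, and fix any set $S\subseteq\mathbb{R}$ of $N$ pairwise distinct real numbers, for instance $S=\{1,2,\ldots,N\}$. The root bound above says that no nonzero $f$ of the prescribed shape can vanish on all of $S$, since $|S|=N$ strictly exceeds the maximum number of real zeros. Equivalently, $S$ is a hitting set for this class of polynomials.

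The algorithm is then immediate: query the blackbox at each point of $S$; answer ``$f\equiv 0$'' if and only if every returned value is zero. Correctness in one direction is trivial (an identically zero polynomial vanishes everywhere); in the other direction it is precisely the hitting-set property just established. The number of queries is $|S|=N$, matching the claimed bound.

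There is no real obstacle here: all the analytic work is already packaged into Theorem~\ref{Thm_model2}, and since we are in the integer-power regime we do not need to carry along an interval $I$ on which the $f_j$ stay positive. The corollary is essentially a restatement of the root bound in the language of blackbox PIT.
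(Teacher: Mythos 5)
Your proof is correct and follows exactly the paper's route: the paper's proof of Corollary~\ref{cor_blackbox2} simply applies Theorem~\ref{Thm_model2}, with the hitting-set-of-integers argument spelled out in the proof of Corollary~\ref{cor_blackbox1}, which you have reproduced faithfully.
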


\begin{proof}
  We apply Theorem~\ref{Thm_model2}.
\end{proof}

\subsection{A Whitebox PIT algorithm}

Results for whiteboxes are more complicated. They rely on the link
between Wronskians and linear independence. In this section, we prove
the following proposition:

\begin{proposition} \label{pro_whitebox1} Let $f=
  \underset{i=1}{\overset{k}{\sum}}a_i\underset{j=1}{\overset{m}{\prod}}f_j^{\alpha_{i,j}}$
  where $f_j$ is a polynomial with at most $t$ monomials, the $a_i$
  are integers and the $\alpha_{i,j}$ are non-negative integers. Let
  $C$ be an upper bound on the degrees of the $f_j$, on the bit size
  of their coefficients as well as on the bit size of the coefficients
  $a_i$ and of the exponents $\alpha_{i,j}$.  Then, there is a
  whitebox PIT algorithm which decides if $f$ is zero in time
  $\tilde{O}\left(C2^{4mk^2\log t}\right)$.
\end{proposition}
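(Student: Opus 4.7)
The plan is a Wronskian-based reduction using Lemma~\ref{lem_zero}: since the $g_i := \prod_{j=1}^m f_j^{\alpha_{i,j}}$ are analytic functions, the family $\{g_1,\ldots,g_k\}$ is linearly dependent iff its Wronskian is identically zero. I would iteratively identify a linearly independent subfamily $B\subseteq\{g_1,\ldots,g_k\}$ and simultaneously express $f$ in that basis; the test $f\equiv 0$ then amounts to checking that every basis coefficient vanishes.

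The core subroutine decides whether $\{g_{i_1},\ldots,g_{i_s}\}$ is linearly independent. The factorization developed in the proof of Theorem~\ref{Thm_model1} writes $W(g_{i_1},\ldots,g_{i_s})=P(x)\cdot\det(T_{u,v})(x)$, where $P$ is a nonzero product of powers of the $f_j$'s and, by Lemma~\ref{lem_monomials}, the polynomial $\det(T_{u,v})$ is supported on a set of at most $N_s=O(t^{O(ms^2)})$ monomials. Thus linear independence is equivalent to $\det(T_{u,v})\not\equiv 0$, which I would detect by evaluating $\det(T_{u,v})$ at $2N_s$ distinct positive integers: by Lemma~\ref{lem_Descartes}, a polynomial with $N_s$ monomials has at most $2N_s-1$ real roots, so vanishing on all these points is equivalent to identically zero. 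A single evaluation only requires the numerical values $f_j^{(r)}(x_0)$ and an $s\times s$ integer determinant, and costs $\mathrm{poly}(C,\log x_0,s,m,t)$.

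When the incoming $g_s$ is dependent on $B$, Cramer's rule at a point $x_0$ where $W(B)(x_0)\neq 0$ expresses each coefficient $c_\ell$ in $g_s=\sum_\ell c_\ell g_{i_\ell}$ as a ratio of $(s-1)\times(s-1)$ Wronskians, again factored as $P\cdot\det T$. The main obstacle is that the $P$-ratios $\prod_j f_j(x_0)^{\alpha_{s,j}-\alpha_{i_\ell,j}}$ have bit size exponential in $C$, so storing the $c_\ell$'s as reduced rationals would blow up the arithmetic. To stay polynomial in $C$, I would maintain the current basis coefficients of $f$ in a partially factored form, and postpone the final check by computing, via multilinearity of the Wronskian, the auxiliary polynomial $W(f,g_{i_1},\ldots,g_{i_{r-1}})=\gamma_{i_r}\cdot W(B)$ (one per cyclic rotation of the basis), whose sparse part is again supported on at most $t^{O(mk^2)}$ monomials and can be tested identically zero by the same evaluation procedure as above. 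Summing the cost of the $k$ independence tests, the final $r\le k$ auxiliary vanishing checks, and the $\tilde{O}(C)$ factor from integer arithmetic yields the announced total $\tilde{O}(C\cdot t^{O(mk^2)})=\tilde{O}(C\cdot 2^{O(mk^2\log t)})$.
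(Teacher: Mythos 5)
Your overall skeleton (incrementally build a maximal linearly independent subfamily of the $g_i$'s, express everything in that basis, reduce $f\equiv 0$ to vanishing of the basis coordinates) is the same as the paper's, and your independence test is a legitimate variant: instead of expanding the factored Wronskian symbolically as in Proposition~\ref{pro_algo}, you exploit the factorization $W=P\cdot\det(T_{u,v})$ from Theorem~\ref{Thm_model1} together with Lemmas~\ref{lem_zero}, \ref{lem_monomials} and~\ref{lem_Descartes} and test $\det(T_{u,v})\equiv 0$ on a hitting set of integer points (essentially Corollary~\ref{cor_blackbox1} applied to $\det(T_{u,v})$); this is sound, and you also correctly identify the key numerical obstacle, namely that the Cramer coordinates can be as large as $2^{2^C}$ because the exponents $\alpha_{i,j}$ may equal $2^C$.

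The gap is in your final step. The identity $W(f,g_{i_1},\ldots,g_{i_{r-1}})=\pm\gamma_{i_r}W(B)$ is correct, but you have no admissible procedure to decide whether this Wronskian vanishes identically. The factorization that makes all other Wronskians tractable (pull $\prod_j f_j^{\alpha_{u,j}+N}$ out of column $u$ and $\prod_j f_j^{k-v+1}$ out of row $v$, leaving the explicitly computable sparse matrix $T_{u,v}$) is not available once one column consists of the derivatives of $f$: that column is a sum of $k$ terms with pairwise different exponent vectors, so no common power product can be extracted and its entries are neither sparse nor of small degree. The ``sparse part'' you invoke is $\pm\gamma_{i_r}\det(T_B)$, which is known only up to the constant $\gamma_{i_r}$ --- precisely the quantity to be tested --- so it cannot be written down; and testing it ``by the same evaluation procedure'' would require exact values $W(f,g_{i_1},\ldots,g_{i_{r-1}})(x_0)$, hence of $g_i(x_0)=\prod_j f_j(x_0)^{\alpha_{i,j}}$, which are integers of bit size $\Theta(2^C)$, destroying the claimed $\tilde{O}\bigl(C\,2^{O(mk^2\log t)}\bigr)$ bound. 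In effect each auxiliary check $W(f,B\setminus\{g_{i_\ell}\})\equiv 0$ is again an instance of the original problem (a sum of $k$ products of powers of sparse polynomials), so the last step is circular. The paper avoids this by never forming a Wronskian containing the sum $f$: when a new $h_{l+1}$ is dependent, its coordinates in the current basis are obtained as ratios of leading coefficients of Wronskians of the pure products (Cramer's rule plus Proposition~\ref{pro_algo}), kept implicitly, and only scalar coefficients are tested for vanishing at the end; to close your argument you would need to carry out the coordinate bookkeeping in the factored form you allude to and perform the final vanishing test on those scalars, rather than through Wronskians involving $f$. (A minor additional point: the proposition claims the explicit exponent $4mk^2\log t$, whereas your accounting only gives $O(mk^2\log t)$.)
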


First, we will need an algorithm for testing if some Wronskians are
identically zero or not.  We now describe an algorithm which takes as
inputs functions
$h_1=\prod_{j=1}^m(f_j)^{\alpha_{1,j}},\ldots,h_l=\prod_{j=1}^m(f_j)^{\alpha_{l,j}}$
(given by sequences $(f_j)_{1\leq j\leq m}$ and $(\alpha_{i,j})_{1\leq
  i\leq l, 1\leq j\leq m}$) and which outputs the leading coefficient
of the Wronskian $W(h_1,\ldots,h_l)$ if this determinant is not
identically zero, and outputs zero otherwise.  We will use the
notation $f(n)=\tilde{O}(g(n))$. It is a shorthand for $f(n) = O(g(n)
\log^k g(n))$ for some constant $k$.

\begin{proposition}\label{pro_algo} 
  There is an algorithm which on the input $(f_j)_{j\leq m}$,
  $(\alpha_{i,j})_{i\leq l,j\leq m}$ outputs the leading coefficient
  of the Wronskian of
  $$\left(\prod_{j=1}^m(f_j)^{\alpha_{1,j}},\ldots,\prod_{j=1}^m(f_j)^{\alpha_{l,j}}\right)$$
if the Wronskian is not identically zero and which outputs zero
otherwise. This algorithm runs in time $\tilde{O}\left(C2^{4ml^2\log
    t}\right)$.
\end{proposition}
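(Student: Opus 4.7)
The plan is to exploit the factorization of the Wronskian established in the proof of Theorem~\ref{Thm_model1}. Writing $h_i := \prod_j f_j^{\alpha_{i,j}}$ and introducing the shift $\tilde h_i := h_i\cdot \prod_j f_j^l = \prod_j f_j^{\alpha_{i,j}+l}$ (so that every exponent is at least $l$ and Lemma~\ref{lem_power} applies to derivatives of order up to $l-1$), the standard Wronskian identity yields
\[
W(\tilde h_1,\ldots,\tilde h_l) = \Bigl(\prod_j f_j\Bigr)^{l^2}\cdot W(h_1,\ldots,h_l),
\]
while equation~(\ref{eqn_factor}), applied with $N=0$ and with $l$ in place of $k$, yields
\[
W(\tilde h_1,\ldots,\tilde h_l) = F(x)\cdot \det\bigl(T_{u,v}\bigr)_{1\le u,v\le l}
\]
for $F = \prod_j f_j^{\sum_u \alpha_{u,j}+l(l+1)/2}$ and the explicit matrix $(T_{u,v})$ constructed in the proof of Theorem~\ref{Thm_model1}. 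Combining the two identities expresses the leading coefficient of $W(h_1,\ldots,h_l)$ as the leading coefficient of $\det(T_{u,v})$ multiplied by the easily computed scalar $\prod_j \mathrm{lc}(f_j)^{\sum_u \alpha_{u,j}-l(l-1)/2}$, and shows that $\det(T_{u,v})\equiv 0$ if and only if $W(h_1,\ldots,h_l)\equiv 0$.

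Accordingly, the algorithm proceeds in four stages. First, compute the derivatives $f_j^{(q-1)}$ for $j\le m$, $q\le l$ as sparse polynomials in~$x$. Second, for each pair $(u,v)$ expand the sum-of-products definition of $T_{u,v}$ into a sparse polynomial in $x$, using the constants $\beta_{\alpha_{u,j}+l,s}$ from Lemma~\ref{lem_power}. Third, compute $D(x):=\det(T_{u,v})$ as a polynomial in $x$ by cofactor expansion or Gaussian elimination over $\mathbb{Q}[x]$. Fourth, output $0$ if $D\equiv 0$, and otherwise output the leading coefficient of $D$ multiplied by the scalar correction above. Correctness is immediate from the factorization.

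The heart of the matter is the complexity bound. By Lemma~\ref{lem_monomials} applied with total degree $m\binom{l}{2}$, the polynomial $D$ has at most $\binom{m\binom{l}{2}+mt-1}{mt-1}=t^{O(ml^2)}$ monomials in~$x$; its degree in $x$ is only $O(mCl^2)$; and its coefficients have bit-size $\tilde O(l^2 C)$ since each $\beta_{\alpha,s}$ has bit-size $\tilde O(lC)$. The main obstacle is to ensure that intermediate polynomials generated during the determinant computation also respect these bounds: a naive cofactor expansion multiplies $l$ sparse polynomials with up to $t^{O(ml)}$ monomials each, and the unsimplified products could in principle exceed the final bound $t^{O(ml^2)}$. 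This is circumvented either by storing the $T_{u,v}$ in the dense representation in $x$ (which is affordable since their degree in $x$ is only $O(mCl^2)$) and using standard fast polynomial arithmetic, or, equivalently, by evaluation--interpolation at $O(mCl^2)$ scalar points $x_0$ where none of the $f_j$ vanishes, at each point recovering $T_{u,v}(x_0)$ from the Taylor expansion of $\tilde h_u$ around $x_0$ modulo $t^l$ and then reducing to a scalar $l\times l$ determinant. Either approach yields a total running time of $\tilde O\bigl(C\cdot t^{O(ml^2)}\bigr) = \tilde O\bigl(C\cdot 2^{4ml^2\log t}\bigr)$, matching the claim.
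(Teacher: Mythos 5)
Your construction and correctness argument coincide with the paper's: shift the exponents by $l$, use $W(g_1,\ldots,g_l)=W(h_1,\ldots,h_l)\prod_j f_j^{l^2}$, factor rows and columns to reduce to $\det(T_{u,v})$, test that determinant for zero, and correct the leading coefficient by the monomial factor $\prod_j f_j^{\sum_u\alpha_{u,j}-l(l-1)/2}$. The gap is in the complexity analysis of the determinant computation. The claimed bound $\tilde{O}\bigl(C2^{4ml^2\log t}\bigr)$ is quasi-linear in $C$, and $C$ bounds simultaneously the degrees of the $f_j$, the bit size of their coefficients, and the bit size of the exponents $\alpha_{i,j}$. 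Your dense-representation (or evaluation--interpolation) route needs $\Theta(mCl^2)$ coefficients or evaluation points, and each coefficient or value carries $\Theta(C)$ bits (the coefficients of the $f_j$ already have up to $C$ bits, and the constants $\beta_{\alpha_{u,j}+l,s}$ have bit size $\Theta(lC)$ since $\alpha_{u,j}$ can be as large as $2^C$). So just writing down the dense data costs $\Omega(C^2)$ up to factors in $m,l,\log t$, which exceeds $\tilde{O}\bigl(C\,t^{4ml^2}\bigr)$ as soon as $C\gg t^{O(ml^2)}$ --- a regime the proposition must cover. Worse, the specific variant ``recover $T_{u,v}(x_0)$ from the Taylor expansion of $\tilde h_u$ around $x_0$'' is exponential in $C$: $\tilde h_u(x_0)=\prod_j f_j(x_0)^{\alpha_{u,j}+l}$ is an integer with roughly $2^C$ bits, because only the bit size of $\alpha_{u,j}$ is bounded by $C$; one must evaluate $T_{u,v}$ from its defining sparse formula, never through $\tilde h_u$ itself.

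The obstacle that motivated your switch to dense arithmetic is in fact illusory, and this is exactly where the paper's proof does the work you skipped. Each entry $T_{u,v}$, expanded as a sparse polynomial in $x$, has at most $2^{ml}l^mt^{ml}$ monomials (Lemma~\ref{lem_cell}), so the unsimplified product of $l$ entries in one term of the Leibniz expansion has at most $\bigl(2^{ml}l^mt^{ml}\bigr)^l=2^{ml^2}l^{ml}t^{ml^2}=t^{O(ml^2)}$ monomials for $t\ge 2$ --- the same order as your target, not beyond it. Multiplying by the $l!\le l^l$ permutations still fits in $2^{4ml^2\log t}$, and since every monomial carries only $\tilde{O}(ml^2C\log t)$ bits, the naive sparse expansion runs in $\tilde{O}\bigl(C2^{4ml^2\log t}\bigr)$ total (Lemmas~\ref{lem_develop} and~\ref{lem_determinant}); the exponential factor is paid in the number of monomials, which is independent of $C$, while the dependence on $C$ stays quasi-linear. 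To repair your proof, drop the dense/interpolation step and carry out the sparse expansion with this bookkeeping, or else add the hypothesis $C\le t^{O(ml^2)}$, which the statement does not grant you.
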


\begin{proof}
  As in the proof of Theorem~\ref{Thm_model1}, we in fact compute the
  Wronskian of $\left(g_1,\ldots,g_l\right)$ where
  $g_i=\prod_{j=1}^m\left(f_j\right)^{\alpha_{i,j}+l}$. To get the
  correct leading coefficient, we can just notice that:
\begin{align*}
  W\left(g_1,\ldots,g_l\right)=W(h_1,\ldots,h_l)\prod_{j=1}^m\left(f_j\right)^{l^2}.
\end{align*}

Hence, we want to compute the Wronskian of
$\left(g_1,\ldots,g_l\right) $. Again as in the proof of
Theorem~\ref{Thm_model1}, we factorize each column $u$ by
$\prod_{j=1}^m\left(f_j\right)^{\alpha_{u,j}}$ and each row $v$ by
$\prod_{j=1}^m\left(f_j\right)^{l-v+1}$. We will denote the resulting
matrix by $M$. The entries of this matrix are polynomials.

According to Lemma~\ref{lem_cell} in Appendix~\ref{Sec_proof_PIT}, we
can compute the expanded polynomial in one cell $(v,u)$ of $M$ in time
$\tilde{O}\left(2^{vm}t^{mv}v^mC\log l\right)$.

Then, computing all entries of the matrix which is of size $(l\times
l)$ needs $\tilde{O}\left(2^{lm}t^{ml}l^mC\right)$ operations. Next,
we have to compute the determinant of this matrix. We are going to
compute this determinant directly by expanding it as a sum of $l!$
products. This computation takes time $\tilde{O}\left(C2^{4ml^2\log
    t}\right)$ by Lemma~\ref{lem_determinant} in
Appendix~\ref{Sec_proof_PIT}.

If the determinant is zero, it means that the Wronskian is zero, then
the algorithm outputs zero. Otherwise, for computing the leading
coefficient, we have to multiply the coefficient we got by the leading
coefficient of
$\frac{\left(\prod_{u=1}^l\prod_{j=1}^m(f_j)^{\alpha_{u,j}}\right)\left(\prod_{v=1}^l\prod_{j=1}^m(f_j)^{l-v+1}\right)}{\prod_{j=1}^m(f_j)^{l^2}}$. This
operation takes $\tilde{O}\left(Cml(C+l)\right)$ operations since we
can compute the product of $n$ integers of size $s$ in time
$\tilde{O}(ns)$.
\end{proof}

Second, we will also need the following algorithm: if
$W(h_1,\ldots,h_l)\neq 0$ and $W(h_1,\ldots,h_{l+1})=0$ then find
$a_1,\ldots,a_{l+1}$ such that $a_1h_1+\ldots+a_lh_l=h_{l+1}$ (these
constants exist according to Lemma~\ref{lem_zero}). So for each
$i\in\mathbb{N}$, $a_1h_1^{(i)}+\ldots+a_lh_l^{(i)}=h_{l+1}^{(i)}$. We
can compute the $a_j$ using Cramer's formula.  As a result, for each
$1\leq j\leq l$ we have:
 
\begin{align*}
a_j = & \frac{
  \begin{vmatrix}  
    h_1 & \cdots & h_{j-1} & h_{l+1} & h_{j+1} & \cdots & h_l\\
    h_1^{\prime} & \cdots & h_{j-1}^{\prime} & h_{l+1}^{\prime} & h_{j+1}^{\prime} & \cdots & h_l^{\prime}\\
    \vdots &\ddots&\vdots&\vdots&\vdots&\ddots&\vdots\\
    h_1^{(l-1)} & \cdots & h_{j-1}^{(l-1)} & h_{l+1}^{(l-1)} &
    h_{j+1}^{(l-1)} & \cdots & h_l^{(l-1)}
  \end{vmatrix}
}{W(h_1,\ldots,h_l)} \\
=&\frac{lc\left(W(h_1,\ldots,h_{j-1},h_{l+1},h_{j+1},\ldots,h_l)\right)}{lc\left(W(h_1,\ldots,h_l)\right)}
\end{align*}
where $lc(W(h_1,\ldots,h_l))$ is the leading coefficient of the
Wronskian for the family $(h_1,\ldots,h_l)$.  The previous algorithm
(Proposition~\ref{pro_algo}) computes these coefficients, so we can
compute the $(a_j)$ in time $\tilde{O}\left(C2^{4ml^2\log t}\right)$.

Finally,  with such algorithms, we just need to go from $a_1h_1$ to
$a_1h_1+\ldots+a_kh_k$. Each time we add a $h_i$, either it is linearly
independent and we add it to the current basis or it is dependent, and we
write it in the current basis. At the end, $a_1h_1+\ldots+a_kh_k$
is expressed as a linear combination of basis functions. We just have to check if
all coefficients are zero to conclude if this function is identically zero or
not. This completes the proof of Proposition~\ref{pro_whitebox1}.

\section{An improved upper bound}\label{Sec_upperbound_refinement}

In this section, we give a proof of Theorem~\ref{thm_main3}.  First,
we point out that Voorhoeve and van der Poorten's paper~\cite{VP75}
contains a result similar to Theorem~\ref{thm_main}, except that all
zeros are counted with multiplicity.

\begin{theorem}\label{Thm_Voor}[Voorhoeve and van der Poorten, 1975]
  Let $f_1,\ldots,f_k$ be real analytic functions over an interval
  $I$. Then,
  \begin{align*}
    N(f_1+\ldots + f_k) \leq k-1
    +\sum_{j=1}^{k-2}N(W_j)+\sum_{j=1}^{k}N(W_j)
  \end{align*}
  where $N(f)$ is the number of roots of $f$ on $I$ counted with
  multiplicities.
\end{theorem}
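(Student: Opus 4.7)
The plan is to prove the bound by induction on $k$, paralleling the structure of Theorems~\ref{Thm_heart} and~\ref{thm_main3} but replacing ``distinct zeros'' by ``zeros counted with multiplicity'' throughout. The base case $k=1$ is immediate since $f_1 = W_1$, giving $N(f_1) \le N(W_1)$. For the inductive step, I may assume every Wronskian $W_1, \ldots, W_k$ is not identically zero on $I$, because otherwise some $N(W_j)$ is infinite and the inequality is vacuous; equivalently, by Lemma~\ref{lem_zero}, the family $(f_1, \ldots, f_k)$ is linearly independent.

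For the inductive step I would mimic the ``divide by $f_1$ and apply Rolle'' mechanism driving the proof of Theorem~\ref{Thm_heart}. Let $Z_1$ be the set of distinct zeros of $f_1$ (so $|Z_1| \le N(W_1)$) and split $I$ into the at most $N(W_1)+1$ maximal open subintervals $J$ on which $f_1 \ne 0$. On each such $J$, since $f_1$ does not vanish, $F$ and $F/f_1$ share the same zeros with the same multiplicities, so Rolle's theorem with multiplicities gives $N_J(F) = N_J(F/f_1) \le 1 + N_J((F/f_1)')$. Writing $(F/f_1)' = \sum_{i=2}^k g_i$ with $g_i := (f_i/f_1)'$, one obtains a sum of $k-1$ analytic functions on $J$; the identity $W(g_2, \ldots, g_{j+1}) = W_{j+1}/f_1^{j+1}$ shows that their Wronskians on $J$ have the same zeros (with multiplicity) as $W_{j+1}$ on $J$, so the inductive hypothesis applied to the $g_i$ converts into a bound on $N_J(F)$ in terms of the $N_J(W_j)$. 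The extra zeros of $F$ at points of $Z_1$ are then accounted for using the pointwise comparison $m_F(z) \le m_{W(f_1,F)}(z)$, which is a direct consequence of $W(f_1,F) = f_1 F' - f_1' F$ vanishing to order at least $\mathrm{ord}_z(f_1) + \mathrm{ord}_z(F) - 1$ whenever $f_1(z) = 0$.

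The main obstacle is the numerical bookkeeping needed to obtain the precise coefficients in the statement — namely $2$ on every $N(W_j)$ with $j \le k-2$ and $1$ on $N(W_{k-1})$, $N(W_k)$. Summing the per-$J$ bound naively contributes $(k-1)$ for each of the up to $N(W_1)+1$ subintervals, producing a coefficient of order $k \cdot N(W_1)$ rather than the sharp $2\cdot N(W_1)$ required. To tighten this, one uses a sharpened Rolle step at each internal $z \in Z_1$ where the analytic extension of $F/f_1$ itself vanishes (i.e.\ where $m_F(z) > m_{f_1}(z)$), saving one unit per boundary incidence, and balances these savings against the excess of $m_{W(f_1,F)}(z)$ over $m_F(z)$ at each $z \in Z_1$. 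The upshot is that every internal zero of $f_1$ is charged exactly twice in the final count, and analogously (through the inductive telescoping) every zero of each lower Wronskian $W_j$, $j \le k-2$, while the top two Wronskians $W_{k-1}$ and $W_k$ only enter once at the top of the chain. This careful rebalancing is the technical heart of the Voorhoeve--van der Poorten argument, and I would defer to~\cite{VP75} for the full combinatorial accounting that produces the exact constants.
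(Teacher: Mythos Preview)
The paper does not give its own proof of this theorem --- it is quoted from~\cite{VP75} --- but Section~\ref{Sec_upperbound_refinement} proves the closely related Theorem~\ref{thm_main3} using precisely the Voorhoeve--van der Poorten machinery, so that is the relevant comparison.

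Your proposal has a genuine gap. The scheme you describe --- induct on $k$, split $I$ at the zeros of $W_1=f_1$, divide by $f_1$ on each piece, apply Rolle, and recurse on the $k-1$ functions $g_i=(f_i/f_1)'$ --- is exactly the argument behind Theorems~\ref{Thm_heart} and~\ref{thm_main}, and as you yourself observe it produces a coefficient of order $k$ on $N(W_1)$ rather than~$2$. Your remedy (``sharpened Rolle step'', ``rebalancing'', ``careful combinatorial accounting'') is not an argument; it is a description of what the final answer should look like, followed by a deferral to the very paper whose theorem you are meant to prove. The exact constants are the entire content of the statement over and above Theorem~\ref{thm_main}, so what remains after your deferral is not a proof of Theorem~\ref{Thm_Voor}.

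The idea you are missing is the Frobenius recursion (Lemma~\ref{lem_Frobenius}): set $R_0=f_1+\cdots+f_k$ and
\[
R_{i+1}=\frac{W_{i+1}^2}{W_i}\left(\frac{R_i}{W_{i+1}}\right)',
\]
so that each $R_i$ is again analytic and $R_{k-1}=W_k$. One never splits the interval. A single Rolle-with-multiplicity step shows
\[
N(R_{i+1})\ \ge\ N(R_i)-1-N(W_i)-N(W_{i+1}),
\]
and summing this telescoping inequality for $i=0,\ldots,k-2$ (using $W_0=1$, $R_{k-1}=W_k$) gives exactly
\[
N(R_0)\ \le\ (k-1)+N(W_k)+N(W_{k-1})+2\sum_{j=1}^{k-2}N(W_j).
\]
The coefficient~$2$ on the intermediate $W_j$ appears simply because each such $W_j$ occurs in two consecutive steps of the chain (once as $W_i$, once as $W_{i+1}$), while $W_{k-1}$ and $W_k$ each occur only once. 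This linear chain, rather than an induction on $k$ that branches over subintervals, is what makes the sharp constants fall out without any ad hoc rebalancing.
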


This result immediately implies Theorem~\ref{Thm_heart} in the
analytic case. However in our applications we will have to not
consider the multiplicity of roots. Indeed, the bounds in
Theorem~\ref{Thm_model1} and~\ref{Thm_model2} do not depend on the
exponents $\alpha_{i,j}$. If we counted multiplicities, the resulting
bound would depend on the $\alpha_{i,j}$. Using some ideas of the
proof of this theorem, we can nevertheless improve
equation~(\ref{Eq1}).

We will denote $W_i=W(f_1,f_2,\ldots,f_i)$ for $i\geq 1$ when the
family $(f_1,\ldots,f_i)$ is clear from the context.  Finally, we
define $W_0=1$.

In addition to Lemma~\ref{lem_zero}, several connections between the
Wronskian and the linear combination of the functions are known.  We
will use a result of Frobenius \cite{Fro76, Pol22,NY95}:

\begin{lemma}\label{lem_Frobenius}
  Let $f_i$ a family of analytic functions.  Let $R_i$ be the family
  of functions defined by:
  \[R_0 = f_1+\ldots+f_k\]
  \[R_{i+1} = \frac{W_{i+1}^2}{W_i}\left(\frac{R_i}{W_{i+1}}\right)^\prime.\]

Then the functions $R_i$ are analytic and $R_{k-1} = W_k$.
\end{lemma}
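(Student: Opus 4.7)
The plan is to guess a closed form for $R_i$ and then verify that it satisfies the prescribed recursion. Specifically, I would set
\[
\tilde R_i = \sum_{j=i+1}^{k} W(f_1, \ldots, f_i, f_j),
\]
where each summand is the Wronskian of the $i+1$ analytic functions $f_1,\ldots,f_i,f_j$ and is therefore itself analytic. For $i=0$ one has $\tilde R_0 = \sum_{j=1}^k f_j = R_0$; for $i = k-1$ only the term $j=k$ is nonzero (every other summand has a repeated argument, so the determinant vanishes), giving $\tilde R_{k-1} = W_k$. Thus once $R_i = \tilde R_i$ is established by induction on $i$, the lemma follows, analyticity of the $R_i$ included.

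For the induction step I would first rewrite the recursion as
\[
W_i R_{i+1} = W_{i+1} R_i' - W_{i+1}' R_i.
\]
Plugging in the inductive hypothesis $R_i = \tilde R_i$ and isolating the summand $j=i+1$ of $\tilde R_i$ (which equals $W_{i+1}$, and so contributes $W_{i+1}W_{i+1}' - W_{i+1}'W_{i+1} = 0$ on the right-hand side), what remains is to establish, for each $j \in \{i+2,\ldots,k\}$, the three-term Wronskian identity
\[
W_i \cdot W(f_1,\ldots,f_{i+1},f_j) = W_{i+1} \cdot W(f_1,\ldots,f_i,f_j)' - W_{i+1}' \cdot W(f_1,\ldots,f_i,f_j).
\]

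The main technical obstacle is this identity, a classical Jacobi--Sylvester relation between consecutive Wronskians. I would derive it from the Desnanot--Jacobi (Dodgson condensation) determinant identity applied to the $(i+2) \times (i+2)$ matrix with columns $f_1,\ldots,f_{i+1},f_j$ and rows indexed by the derivative orders $0,1,\ldots,i+1$: its full determinant is $W(f_1,\ldots,f_{i+1},f_j)$, its central $i \times i$ minor coincides (after recognizing that a Wronskian with consecutively shifted rows is itself a Wronskian of differentiated arguments) with $W_i$, and its four corner $(i+1) \times (i+1)$ minors can be identified with $W_{i+1}$, $W(f_1,\ldots,f_i,f_j)$ and their derivatives via the column-wise differentiation rule for a Wronskian (the terms with a repeated row cancel). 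Substituting these identifications into Desnanot--Jacobi yields the displayed identity.

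Finally, to handle points of $I$ at which $W_i$ vanishes (where the raw recursion is only formally defined), I would note that both sides of $W_i R_{i+1} = W_{i+1} R_i' - W_{i+1}' R_i$ are analytic, so the equality $R_{i+1} = \tilde R_{i+1}$ established on the open dense complement of the zero set of $W_i$ extends by continuity to all of $I$, showing simultaneously that $R_{i+1}$ is analytic and that it equals $\tilde R_{i+1}$. The degenerate case $W_i \equiv 0$ on $I$ is handled by Lemma~\ref{lem_zero}: the family $(f_1,\ldots,f_i)$ is then linearly dependent, every higher Wronskian vanishes identically, and both sides of the identity are trivially zero.
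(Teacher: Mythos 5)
Your argument is correct, but note that the paper never proves Lemma~\ref{lem_Frobenius} at all: it is quoted as a known result of Frobenius (with references to Frobenius, P\'olya, and Niven--et al.), so there is no in-paper proof to match. What you have written is essentially the classical derivation behind that citation, made self-contained: the closed form $R_i=\sum_{j=i+1}^{k}W(f_1,\dots,f_i,f_j)$ is exactly the sequence of intermediate functions in Frobenius's theory, the base case and the case $i=k-1$ are right, and the induction step reduces, as you say, to the classical relation $W_i\cdot W(f_1,\dots,f_{i+1},f_j)=W_{i+1}\cdot W(f_1,\dots,f_i,f_j)'-W_{i+1}'\cdot W(f_1,\dots,f_i,f_j)$, which is the Wronskian form of the Jacobi/Pl\"ucker identity. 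Your handling of the zeros of $W_i$ and $W_{i+1}$ by proving the identity off their (isolated) zero sets and extending by analyticity is the right way to make sense of the statement; the only caveat is that when some $W_i\equiv 0$ the recursion itself does not define $R_{i+1}$, so the lemma should really be read under the linear-independence hypothesis of Theorem~\ref{thm_main3} (via Lemma~\ref{lem_zero}), as the paper in fact uses it.

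Two small inaccuracies in your sketch of the determinant identity, neither fatal: the relevant application of Desnanot--Jacobi is the generalized (two-row/two-column) Jacobi identity in which you delete the \emph{last two rows} (derivative orders $i$ and $i+1$) and the \emph{last two columns} ($f_{i+1}$ and $f_j$) of the $(i+2)\times(i+2)$ matrix; the complementary minor is then the leading principal $i\times i$ minor, which is literally $W_i$ --- there is no ``central'' minor and no need to reinterpret shifted rows as Wronskians of derivatives. Likewise, identifying the two mixed minors with $W_{i+1}'$ and $W(f_1,\dots,f_i,f_j)'$ uses \emph{row-wise} differentiation of the determinant (all terms but the one differentiating the last row vanish because of repeated rows), not column-wise. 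With those corrections your proof is complete and, unlike the paper, does not rely on an external reference.
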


\begin{proof}[Proof of Theorem~\ref{thm_main3}]
  Let $R_i$ be the family of analytic functions defined by:
  \[R_0 = f_1+\ldots+f_k\]
  \[R_{i+1} = \frac{W_{i+1}^2}{W_i}\left(\frac{R_i}{W_{i+1}}\right)^\prime.\]
  
  We will prove by induction that for all $1\leq i\leq k-1$, the
  analytic function $R_i$ has at least
  $Z(f_1+\ldots+f_k)-i-Z(W_i)-2\sum_{j=1}^{i-1}Z(W_j)$ roots on
  $I$. That yields the theorem with $i=k-1$ and
  Lemma~\ref{lem_Frobenius}.
  
  If $i=0$, then $R_0 = f_1+\ldots+f_k$ and $R_0$ has exactly (and so
  at least) $Z(f_1+\ldots+f_k)$ zeros.
  
  Suppose that the property is verified for a particular $i$ such that
  $i\leq k-2$.  We will denote $m_x(F)$ the multiplicity of the root
  $x$ in $F$ for all $x\in \mathbb{R}$ and analytic function
  $F$. We define
  four values:
  \begin{itemize}
  \item $Z_i^+$ is the number of $x\in \mathbb{R}$ such that
    $m_x(R_i) > m_x(W_{i+1})>0$.
  \item $Z_i^=$ is the number of $x\in \mathbb{R}$ such that
    $m_x(R_i) = m_x(W_{i+1})>0$.
  \item $Z_i^{->}$ is the number of $x\in \mathbb{R}$ such that
    $m_x(W_{i+1})>m_x(R_i) >0$.
  \item $Z_i^{-0}$ is the number of $x\in \mathbb{R}$ such that
    $m_x(W_{i+1})>0 = m_x(R_i) $.
  \end{itemize}
  
  We have: $Z(W_{i+1})=Z_i^+ +Z_i^=+Z_i^{-0}+Z_i^{->}$. 

  We know by Lemma~\ref{lem_Frobenius} that $R_i$ is indeed analytic.
  Then by induction hypothesis, the fraction $\frac{R_i}{W_{i+1}}$ has
  at least
  \begin{align*}
    Z(f_1+\ldots+f_k) -i-Z(W_i) -2\left(\sum_{j=1}^{i-1}Z(W_j)\right)
    -Z_i^=-Z_i^{->}
  \end{align*} 
  roots and at most $Z_i^{->}+Z_i^{-0}$ poles. By Rolle's Theorem, the
  number of zeros of $\left(\frac{R_i}{W_{i+1}}\right)^\prime$ is at
  least
  \begin{multline*}
    \left[Z(f_1+\ldots+f_k)-i-Z(W_i)-2\left(\sum_{j=1}^{i-1}Z(W_j)\right)
      -Z_i^=-Z_i^{->}\right] \\ 
    -\left[Z_i^{->}+Z_i^{-0}+1\right].
  \end{multline*} 
  So, the number of zeros of $R_{i+1} =
  \frac{W_{i+1}^2}{W_i}\left(\frac{R_i}{W_{i+1}}\right)^\prime$ is at
  least
  \begin{align*}
    & \left[Z(f_1+\ldots+f_k)-i-Z(W_i)-2\left(\sum_{j=1}^{i+1}Z(W_j)\right)-Z_i^=-Z_i^{->}\right]\\
    & -\left[Z_i^{->}+Z_i^{-0}+1\right]+Z_i^{->}-Z(W_i).
  \end{align*}
  We used here that if $x$ is such that $0<m_x(R_i)<m_x(W_{i+1})$ then
  $-m_x(W_{i+1})<m_x\left(R_i)-m_x(W_{i+1})\right)<0$ and so
    $m_x\left(W_{i+1}^2\left(\frac{R_i}{W_{i+1}}\right)^\prime\right)\geq
    m_x(W_{i+1})-1>0$. Hence
  \begin{align*}
    Z(R_{i+1})\geq Z(f_1+\ldots+f_k)-(i+1)
    -Z(W_{i+1})-2\left(\sum_{j=1}^iZ(W_j)\right).
  \end{align*}
\end{proof}

\section{Optimality of Theorem~\ref{thm_main}} \label{Sec_optimality}

Recall that in Theorem~\ref{thm_main}, it was proved
that $$Z(a_1f_1+\ldots+a_kf_k)\leq(1+|\Upsilon|)k$$ with $\Upsilon =
\underset{1\leq i\leq k}{\bigcup}Z\left(W(f_1,\ldots,f_i)\right)$. It
will be shown in Theorem~\ref{thm_opt} that this theorem is quite
optimal in the sense that for arbitrarily large values of $|\Upsilon|$
and $k$, we can find functions $f_1,\ldots,f_k$ and coefficients
$a_1,\ldots,a_k$ such
that $$Z(a_1f_1+\ldots+a_kf_k)\geq(1+|\Upsilon|)(k-1).$$

We begin by proving a technical lemma.

\begin{lemma} \label{lem_derivopt} Let $f$ be a non-constant
  polynomial. There exists, for $0\leq i\leq q$, rational functions
  $F_{i,q}$ such that if we define the function
  $h_{p,i}=\frac{p!}{(p-i)!}\left(f^\prime\right)^if^{p-i}$, then we
  have:
\begin{enumerate}
\item for all $q \geq 0$, we have $F_{q,q}=1$
\item for all $0\leq i\leq q$, the rational function
  $\left[\left(f^\prime\right)^qF_{i,q}\right]$ is a polynomial
\item For all $q \geq 0$, for all $p \geq 1$ we have:
  $\left(f^p\right)^{(q)}=\underset{i=0}{\overset{q}{\sum}}h_{p,i}
  F_{i,q}$.
\end{enumerate}
\end{lemma}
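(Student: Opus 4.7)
The plan is to proceed by induction on $q$, building the rational functions $F_{i,q}$ from an explicit recurrence obtained by formally differentiating the identity in condition (3). For the base case $q=0$, setting $F_{0,0}=1$ trivially satisfies conditions (1) and (2), and reduces condition (3) to $f^p=h_{p,0}$, which holds by definition.

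For the inductive step, the first piece of input I would compute is the derivative of the base function $h_{p,i}$. A direct application of the product rule, using the identity $\frac{p!}{(p-i)!}(p-i)=\frac{p!}{(p-i-1)!}$, gives the clean recursion
\begin{equation*}
(h_{p,i})' \;=\; \frac{i f''}{f'}\, h_{p,i} + h_{p,i+1}.
\end{equation*}
Differentiating the inductive hypothesis $(f^p)^{(q)}=\sum_{i=0}^{q}h_{p,i}F_{i,q}$ and reindexing then forces the definition
\begin{equation*}
F_{0,q+1} = F_{0,q}',\qquad F_{q+1,q+1} = F_{q,q},\qquad F_{i,q+1} = \frac{i f''}{f'}F_{i,q} + F_{i,q}' + F_{i-1,q}\ \text{ for } 1\leq i\leq q.
\end{equation*}
With this recurrence, condition (3) at rank $q+1$ holds by construction, and condition (1) is immediate since $F_{q+1,q+1}=F_{q,q}=1$ by the inductive hypothesis.

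The only substantive verification is condition (2): that $(f')^{q+1}F_{i,q+1}$ is a polynomial. I would introduce the polynomial $P_{i,q}:=(f')^q F_{i,q}$ (a polynomial by the inductive hypothesis, with the convention $P_{-1,q}=0$) and use $F_{i,q}'=P_{i,q}'/(f')^q - q P_{i,q} f''/(f')^{q+1}$. Substituting into the recurrence and simplifying, one finds
\begin{equation*}
(f')^{q+1} F_{i,q+1} \;=\; (i-q)\,f''\,P_{i,q} + f'\bigl(P_{i,q}' + P_{i-1,q}\bigr) \ \text{ for } 0\leq i\leq q,
\end{equation*}
while the case $i=q+1$ is trivial because $F_{q+1,q+1}=1$. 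The right-hand side is manifestly a polynomial. The only point demanding attention is checking that the single $1/f'$ appearing in the recurrence is exactly cancelled by one factor of $(f')^{q+1}$, which is precisely why the induction is set up around the polynomial normalization $P_{i,q}=(f')^qF_{i,q}$ rather than $F_{i,q}$ itself. I do not expect any real obstacle beyond this bookkeeping.
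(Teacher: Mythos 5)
Your proposal is correct and follows essentially the same route as the paper: the same recurrence $F_{0,q+1}=F_{0,q}'$, $F_{i,q+1}=F_{i,q}'+\bigl(i\frac{f''}{f'}\bigr)F_{i,q}+F_{i-1,q}$, $F_{q+1,q+1}=1$, obtained by differentiating $h_{p,i}$ and the inductive identity, with condition (2) checked by multiplying through by $(f')^{q+1}$ and invoking the inductive hypothesis that $(f')^qF_{i,q}$ is a polynomial. Your explicit normalization $P_{i,q}=(f')^qF_{i,q}$ and the resulting identity $(f')^{q+1}F_{i,q+1}=(i-q)f''P_{i,q}+f'\bigl(P_{i,q}'+P_{i-1,q}\bigr)$ is just a tidier bookkeeping of the same verification the paper sketches.
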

The main point is that $F_{i,q}$ does not depend on $p$.
\begin{proof}
  We define $F_{i,q}$ by induction on $q$. If $q=0$, let us define
  $F_{0,0}=1$. Then, we have $f^p=h_{p,0}$ and
  $\left[\left(f^\prime\right)^0F_{0,0}\right] = 1$.

  We suppose now that $F_{i,q^\prime}$ are defined for all $i$,
  $q^\prime$ such that $0\leq i\leq q^\prime\leq q$. Let us define
  $F_{i,q+1}$.  We have:
  \begin{align*} 
    \left(h_{p,i}\right)^\prime & = \left[\frac{p!}{(p-i)!}(f^\prime)^if^{p-i}\right]^\prime \\
    & = \frac{p!}{(p-i)!} \left[(p-i) (f^\prime)^{i+1}f^{p-i-1} + i f^{\prime\prime}(f^\prime)^{i-1}f^{p-i} \right] \\
    & = h_{p,i+1} +
    \left(i\frac{f^{\prime\prime}}{f^\prime}\right)h_{p,i}.
  \end{align*}
  So,
  \begin{align*}
    \left(f^p\right)^{(q+1)} & =\left(\underset{i=0}{\overset{q}{\sum}}h_{p,i}F_{i,q} \right)^\prime \\
    & = \underset{i=0}{\overset{q}{\sum}}\left[
      h_{p,i}(F_{i,q})^\prime + \left(h_{p,i+1} +
        \left(i\frac{f^{\prime\prime}}{f^\prime}\right)h_{p,i}\right)
      F_{i,q}\right] \\
    & = h_{p,0} (F_{0,q})^\prime +
    \left[\underset{i=1}{\overset{q}{\sum}}
      h_{p,i}\left((F_{i,q})^\prime +
        F_{i,q}\left(i\frac{f^{\prime\prime}}{f^\prime}\right) +
        F_{i-1,q}\right) \right] \\
    & \qquad \qquad \qquad \qquad \qquad \qquad \qquad \qquad \qquad
    \qquad+h_{p,q+1} F_{q,q}.
  \end{align*}

We can then define 
  \begin{align*}
    & F_{0,q+1}= (F_{0,q})^\prime \\
    & \rm{for} \ 1\leq i\leq q,\  F_{i,q+1}=(F_{i,q})^\prime + 
    F_{i,q}\left(i\frac{f^{\prime\prime}}{f^\prime}\right)+F_{i-1,q} \\
    & \textrm{and } F_{q+1,q+1}=F_{q,q}=1 \textrm{.}
  \end{align*}

  Then, we have (1) and (3) by construction. Finally, (2) is verified,
  since by induction hypothesis:
  \begin{align*}
    (f^\prime)^{q+1}F_{i,q}^\prime=\left[f^\prime \left( (f^\prime )^q
        F_{i,q}\right)\right]^\prime - (q+1)f^\prime \left(
      (f^\prime)^q F_{i,q}\right)\textrm{ is a polynomial.}
  \end{align*}
\end{proof}

We are going to show that the zeros of $W\left( f^{\alpha_1+k},
  \ldots, f^{\alpha_s+k}(x)\right)$ are either zeros of $f$ or zeros
of $f^\prime$.

\begin{lemma} \label{lem_Zff} Let $f$ be an analytic non-constant
  function in an interval $I$ and $\alpha_0,\ldots,\alpha_k$ be $k$
  pairwise distinct integers (with $k\geq 1$).  Then
  \begin{align*}
    \left\{x \in I | \exists s \leq k, \: W\left( f^{\alpha_1+k},
        \ldots, f^{\alpha_s+k} \right)(x) = 0\right\} \subseteq
    \left\{x \in I | (ff^\prime)(x) = 0\right\}.
  \end{align*}
\end{lemma}

\begin{proof} 
  Let us consider $f^{\alpha_1+k}, \ldots, f^{\alpha_k+k}$. First
  suppose that this family is linearly dependent. This means that
  there exist some constants $(a_1,\ldots,a_k)\in
  \mathbb{R}^k\setminus \{(0,0,\ldots,0)\}$ such
  that \begin{equation}\label{eqn_lineardependence} \sum_{i=1}^ka_i
    f^{\alpha_i+k}=0 \textrm{ on }I.\end{equation} But the integers
  $(\alpha_i+k)$ are all distinct so the polynomial
  $P(Y)=\sum_{i=1}^ka_i Y^{\alpha_i+k}$ is not zero. Hence $P(Y)$ has
  a finite number of roots. By (\ref{eqn_lineardependence}),
  $\rm{Im}(f)$ is included in the (finite) set of roots of
  $P$. Nevertheless, as $f$ is continuous, by the intermediate value
  theorem $\rm{Im}(f)$ is a real interval. So $\rm{Im}(f)$ is a
  singleton. This contradicts the hypothesis that $f$ is not constant.
  Therefore, this family is linearly independent.

 Let $\Delta$ be the matrix defined by $\Delta_{i,j} = \left( f^{\alpha_i+k}\right)^{(j-1)}$.
  By Lemma~\ref{lem_derivopt}, we get 
  $\Delta_{i,j}= \underset{l=0}{\overset{j-1}{\sum}}  h_{\alpha_i+k,l} \ F_{l,j-1}$, i$.$e$.$ in terms of matrix product: 
  $$\Delta = \left[ h_{\alpha_i+k,l-1} \right]_{1\leq i,l \leq s}\left[ F_{l-1,j-1} \mathbb{1}_{l \leq j}\right]_{1\leq l,j \leq s}.$$
  The second matrix of the product is an upper triangular matrix whose entries on the main diagonal are $1$ and so its determinant is $1$.
  Then, 
  $$\det \left(\left(\Delta_{i,j}\right)_{1\leq i,j \leq s}\right) = \det \left(\left(h_{\alpha_i+k,j-1}\right)_{1\leq i,j \leq s}\right).$$
  Finally, $h_{\alpha_i+k,j-1}=\frac{(\alpha_i+k)!}{(\alpha_i+k-j+1)!}\left[f^{\alpha_i}\right]\left[\left(f^\prime\right)^{j-1}f^{k-j+1}\right]$. The first bracket does not depend on $j$ and the second one on $i$. 
  Consequently, 
  $$\det \left(h_{\alpha_i+k,j-1}\right) = \left[f^{\underset{l=1}{\overset{s}{\sum}}\alpha_l}\right]\left[\left(f^\prime\right)^{\binom{s}{2}}f^{(k-s+1)s+\binom{s}{2}}\right] \det \left(\frac{(\alpha_i+k)!}{(\alpha_i+k-j+1)!}\right).$$
  Then, for all $x$ in I :
  \begin{align}\label{eqn_impl}
    W\left( f^{\alpha_1+k}, \ldots, f^{\alpha_s+k} \right)(x) = 0  \Leftrightarrow \det \left(h_{\alpha_i+k,j-1}\right)(x) =0 \\
    \Rightarrow \left\{f(x)=0 \textrm{ or } f^\prime(x)=0 \textrm{ or
      }\det \left(\frac{(\alpha_i+k)!}{(\alpha_i+k-j+1)!}\right) = 0
    \right\}. \nonumber
\end{align}
If $\det \left(\frac{(\alpha_i+k)!}{(\alpha_i+k-j+1)!}\right) =0$, as
it does not depend on $x$, the function $\det
\left(h_{\alpha_i+k,j-1}\right)$ vanishes for all $x$ and so the
Wronskian is zero over $I$. But as the functions $f^{\alpha_i+k}$ are
analytic, they would be linearly dependent by
Lemma~\ref{lem_zero}. That contradicts the hypothesis.  Consequently,
  \begin{align*} 
    \left\{x \in I | \exists s \leq k, \: W\left(
        f^{\alpha_1+k}, \ldots, f^{\alpha_s+k} \right) = 0\right\}
    \subseteq \left\{x \in I | (ff^\prime)(x) = 0\right\}. 
  \end{align*}
\end{proof}

As a byproduct, we give another proof of the weak version of
Descartes' rule of signs (Lemma~\ref{lem_Descartes}).  Let
$g=\sum_{i=1}^ka_ix_i^{\alpha_i}$. We need to show that the number of
distinct real roots is bounded by $2k-1$.  We can use the result of
Lemma~\ref{lem_Zff} with $f(x)=x$. In this case $g = \sum_{i=1}^k
a_ix^{\alpha_{i}}$. We get
\begin{align*}
  \Upsilon = \left\{x \in I | \exists s \leq k, \: W\left(
      f^{\alpha_1+k}, \ldots, f^{\alpha_s+k} \right) = \{0\}\right\}
 & \subseteq \left\{x \in I | (ff^\prime)(x) = 0\right\} \\
 & \subseteq \{0\}
\end{align*}
So, Theorem~\ref{thm_main} gives: $Z(f) \leq 2k-1$. A similar proof of
Lemma~\ref{lem_strongDescartes} appears in \cite{PoSz76} (Part V,
exercise 90).

In Lemma~\ref{lem_Zff}, it can be seen that the converse of the
implication~(\ref{eqn_impl}) is true as soon as $f^\prime$ really
appears as a factor of $\det
\left(\left(h_{\alpha_i+k,j-1}\right)_{1\leq i,j \leq s}\right)$. It
is the case when $\binom{s}{2}$ is different from zero, that is to say
when $s\geq 2$.
This implies the following result.
\begin{lemma} \label{lem_easy} Let $f$ be an analytic non-constant
  function in an interval $I$ and $\alpha_0,\ldots,\alpha_k$ be $k$
  pairwise distinct integers with the condition $k\geq 2$.  Then
$$\left\{x \in I | \exists s \leq k, \: W\left( f^{\alpha_1+k}, \ldots, f^{\alpha_s+k} \right)(x) = 0\right\} = \left\{x \in I | (ff^\prime)(x) = 0\right\}.$$
\end{lemma}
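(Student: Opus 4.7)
The plan is to recycle the explicit factorization of the Wronskian that was derived inside the proof of Lemma~\ref{lem_Zff}, namely
$$W(f^{\alpha_1+k},\ldots,f^{\alpha_s+k}) \;=\; C_s\cdot f^{N_s}\cdot (f')^{\binom{s}{2}},$$
where $N_s = \sum_{l=1}^{s}\alpha_l + ks - \binom{s}{2}$ and $C_s = \det\left(\frac{(\alpha_i+k)!}{(\alpha_i+k-j+1)!}\right)_{1\le i,j\le s}$ is a nonzero scalar (as already shown in Lemma~\ref{lem_Zff}'s proof). Since that lemma already supplies the inclusion $\subseteq$, only the reverse inclusion needs work.

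To establish $\supseteq$, I would pick any $x_0 \in I$ with $(ff')(x_0)=0$ and split into the two cases suggested by the paragraph immediately preceding the statement. If $f'(x_0)=0$, I take $s=2$ (which is exactly where the hypothesis $k\ge 2$ is used): then $\binom{s}{2}=1$, so $f'$ genuinely appears as a factor in the displayed formula and $W(f^{\alpha_1+k},f^{\alpha_2+k})(x_0)=0$. If instead $f(x_0)=0$, I take $s=1$, for which the Wronskian reduces to the single function $f^{\alpha_1+k}$ and therefore vanishes at $x_0$ (using $\alpha_1+k\ge 1$, a positivity condition already implicit in the paper's uniform application of Lemma~\ref{lem_derivopt}).

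The only delicate point is the one explicitly highlighted in the paragraph preceding the lemma: I must verify that the scalar $C_s$ is nonzero for $s=2$, so that the factor $f'$ is not silently cancelled. But this is exactly the linear-independence argument already carried out inside the proof of Lemma~\ref{lem_Zff}; a vanishing $C_2$ would force $W\equiv 0$ on $I$, which by Lemma~\ref{lem_zero} would imply linear dependence of the two distinct analytic functions $f^{\alpha_1+k}$ and $f^{\alpha_2+k}$, contradicting the non-constancy of $f$ via the intermediate value theorem. Once this is granted, the two cases combine immediately to give the desired inclusion and hence the set equality.
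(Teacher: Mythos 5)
Your proposal is correct and takes essentially the same route as the paper: the paper obtains Lemma~\ref{lem_easy} precisely by reusing the factorization $W\left(f^{\alpha_1+k},\ldots,f^{\alpha_s+k}\right)=C_s\,f^{N_s}\left(f^\prime\right)^{\binom{s}{2}}$ from the proof of Lemma~\ref{lem_Zff} and observing that for $s\geq 2$ the factor $f^\prime$ genuinely appears, so your only (cosmetic) deviation is handling the zeros of $f$ with the $s=1$ Wronskian rather than folding them into the $s=2$ one. Note merely that your ``delicate point'' $C_2\neq 0$ is not actually needed for the inclusion $\supseteq$ (if $C_2=0$ the Wronskian vanishes identically, so the point lies in the left-hand set trivially); it is harmless and simply repeats the linear-independence argument already carried out in Lemma~\ref{lem_Zff}.
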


We have now all the tools to prove the main result of the section: the
optimality of Theorem~\ref{thm_main}.

\begin{theorem}\label{thm_opt}
  Let $\Upsilon = \left\{x\in I | \exists i \leq k, W(f_1, \ldots,
    f_i)(x)=0 \right\}$ as in Theorem \ref{thm_main}.  For every $k$
  and $p$, there exists a function $g = \sum_{i=1}^k
  a_if^{\alpha_{i}}$ such that $\alpha_i$ are positive integers, $f$
  is a polynomial such that $|\Upsilon| \geq p$ and such that $g$ has
  at least $\left(1+|\Upsilon|\right)(k-1)+Z(f)$ zeros.
\end{theorem}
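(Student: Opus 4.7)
The plan is to exhibit $f$ and $g$ realizing the bound exactly. Take $f = h^2$ where $h(x) = \prod_{i=1}^d(x-a_i)$ with $a_1 < \cdots < a_d$ distinct reals and $d = \lceil (p+1)/2\rceil$, and take $g(x) = Q(f(x))$ where $Q(y) = y\prod_{j=1}^{k-1}(y - y_j)$ for $k-1$ suitably chosen positive reals $y_j$. Expanding, $Q(y) = \sum_{i=1}^k a_i y^i$: the coefficients $a_i$ are (up to sign) elementary symmetric polynomials in the $y_j$, hence all nonzero when the $y_j$ are positive, so $g = \sum_{i=1}^k a_i f^i$ has the required form with $k$ distinct positive integer exponents $1, \ldots, k$.

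Assuming $k \geq 2$ (the case $k = 1$ is trivial), Lemma~\ref{lem_easy} applied after a reindexing of exponents identifies $\Upsilon$ with $Z(ff')$. Since $f' = 2hh'$ and by Rolle's theorem the $d-1$ roots of $h'$ strictly interlace the $d$ roots of $h$, the sets $Z(h)$ and $Z(h')$ are disjoint, giving $|\Upsilon| = d + (d-1) = 2d-1 \geq p$. Moreover $\Upsilon$ partitions $\mathbb{R}$ into $2d$ open intervals on each of which $f$ is strictly monotonic and strictly positive: on each bounded interior interval, the range of $f$ is $(0, h(c)^2)$ for the unique critical point $c$ at its boundary, while on each of the two unbounded intervals it is $(0, +\infty)$. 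I would then choose the $y_j$ distinct in $(0, M)$ where $M := \min_c h(c)^2 > 0$, ensuring that each $y_j$ is attained by $f$ exactly once on each of the $2d$ monotonicity intervals.

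Counting zeros of $g$: they form exactly $f^{-1}\bigl(\{0, y_1, \ldots, y_{k-1}\}\bigr)$. The preimage of $0$ supplies the $d = Z(f)$ points $a_i$; for each $y_j$, the preimage contributes $2d$ zeros (one on each monotonicity interval), all disjoint from the $a_i$'s since $f > 0$ off $\Upsilon$. The total is $Z(f) + 2d(k-1) = Z(f) + (1+|\Upsilon|)(k-1)$, matching the claim. The only conceptual step is Lemma~\ref{lem_easy}; the rest is a direct verification that the chosen $f = h^2$ has the right monotonicity structure, which is clear because $h^2$ vanishes exactly at the simple roots of $h$ and the strictly positive local maxima of $h^2$ lie above a common positive threshold $M$.
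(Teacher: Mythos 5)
Your construction is correct and follows essentially the same strategy as the paper's proof: build $g$ as a composition of a $k$-term univariate polynomial with an oscillating polynomial $f$, invoke Lemma~\ref{lem_easy} to identify $\Upsilon$ with the zero set of $ff^\prime$, and count the zeros of $g$ branch by branch on the intervals cut out by $\Upsilon$, arriving at exactly $(1+|\Upsilon|)(k-1)+Z(f)$ zeros. The only difference is in the explicit engineering: the paper fixes the sparse outer polynomial $x\prod_{i=1}^{k-1}(x^2-i^2)$ with roots at the integers and rescales the inner polynomial so every branch sweeps past all $k-1$ of them, whereas you take $f=h^2$ and place the outer roots $y_j$ below the smallest positive critical value of $f$ --- a cosmetic variation on the same idea.
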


\begin{proof}

\begin{figure}
\centering
\setlength\fboxsep{0pt}
\setlength\fboxrule{0.5pt}
\fbox{\includegraphics[scale=0.65]{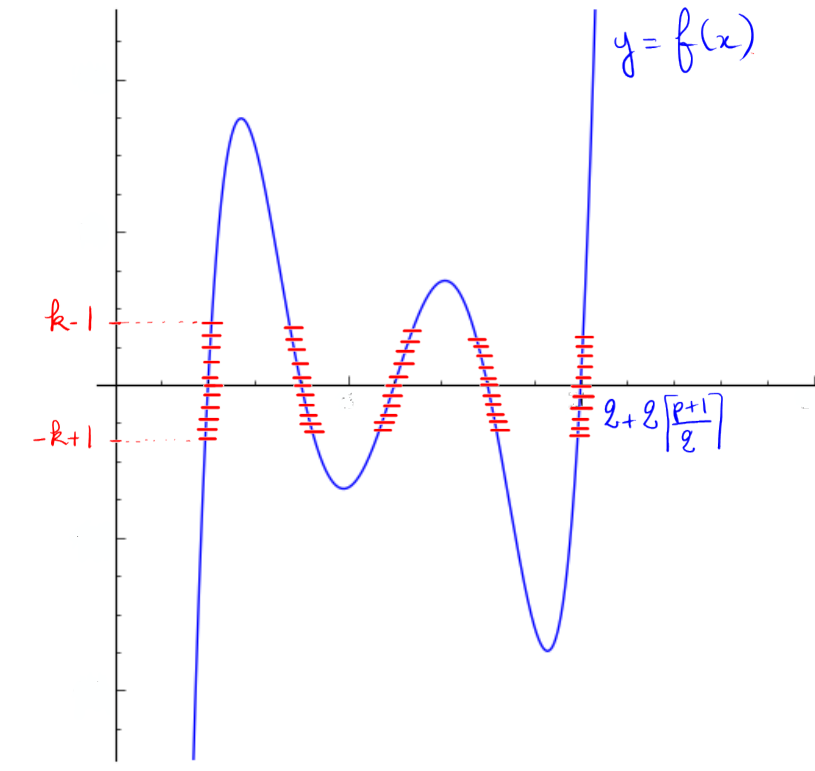}}
\caption{roots of $g=h\circ f$ in the proof of Theorem~\ref{thm_opt}}
\label{fig:nb-roots}
\end{figure}

  Let $h=x\prod_{i=1}^{k-1}\left(x^2-i^2\right)$. This polynomial is $k$-sparse and has $2k-1$ distinct real roots: $-k+1 < \ldots < -1 < 0 < 1 < k-1$.

  Let $f = k~\underset{i=1}{\overset{1+\left\lceil \frac{p+1}{2}  \right\rceil}{\prod}}(x-2i)$.

  Then, we just have to verify that $g = h \circ f$ has the required properties.

We have $g(x) =0$ if and only if $f(x)\in [-k+1,k-1]\cap\mathbb{Z}$. But for $y$ an odd integer, we have $|f(y)|> k-1$ and for $y$ an even integer between $2$ and $2+2\left\lceil \frac{p+1}{2}  \right\rceil $, we have $f(y)=0$. By the Intermediate Value Theorem, $g$ has at least $k-1$ zeros over each interval $(n,n+1)$ with $1 \leq n \leq 2+2\left\lceil \frac{p+1}{2}  \right\rceil$. So 

\begin{align} \label{eqn_z(g)}
  Z(g) & = 2\left(1+\left\lceil \frac{p+1}{2}  \right\rceil\right)
(k-1)+\left(1+\left\lceil \frac{p+1}{2}  \right\rceil\right) \nonumber
\\
& = (2k-1)\left(1+\left\lceil \frac{p+1}{2}  \right\rceil\right).
\end{align}

Rolle's Theorem ensures that for two roots of $f$, there exists a root of $f^\prime$ which is strictly between both roots of $f$. Hence, $Z(ff^\prime) \geq 2Z(f)-1 =1+2 \left\lceil \frac{p+1}{2}  \right\rceil$. Considering the degree of $ff^\prime$, we find $Z(ff^\prime) =1+2 \left\lceil \frac{p+1}{2}  \right\rceil$.   

Besides, $f$ is not constant so by Lemma~\ref{lem_easy}, $|\Upsilon|=Z(ff^\prime)$. Hence,
\begin{align}\label{eqn_upsilon}
|\Upsilon|=Z(ff^\prime) =1+2 \left\lceil \frac{p+1}{2}  \right\rceil.
\end{align} 

We can verify that the hypothesis $|\Upsilon|\geq p$ is true. 
 Finally, equations (\ref{eqn_z(g)}) and (\ref{eqn_upsilon}) show that $Z(g)\geq(|\Upsilon|+1)(k-1)+Z(f)$.
\end{proof}

In the proof of Theorem~\ref{thm_opt}, the roots of all $W(f_1,\ldots,f_i)$ are included in the zeros of $W(f_1,\ldots,f_k)$. So, it could be possible to improve both Theorem \ref{thm_main} and Theorem \ref{thm_main3} by proving the following proposition.
\begin{open*}
  Let $f_1, \ldots, f_k$ be analytic functions on an infinite interval $I$ and $a_1, \ldots, a_k$ be non-zero real constants.
  Is the inequality 
  \begin{align*}
    Z(a_1f_1+\ldots + a_kf_k) \leq k-1 + \sum_{i=1}^kZ(W(f_1, \ldots, f_i))
  \end{align*}
always true?
\end{open*}

{\small
\section*{Acknowledgments} 

Saugata Basu pointed out to one of the authors (P.K.) 
that P\'olya and Szeg\H{o}'s book could be relevant and Maurice
Rojas pointed out Voorhoeve's papers and the fact that
Theorem~\ref{Thm_model2} implies Corollary~\ref{Cor_trinomials}. In the introduction, we mentioned the problem of bounding the number
or real solutions to an equation of the form $fg+1=0$. This question
was raised by Arkadev Chattopadhyay. 

\bibliographystyle{plain}
\bibliography{Wronskien}

\begin{thebibliography}{10}

\bibitem{AM11}
S.~Aaronson and D.~van Melkebeek.
\newblock On circuit lower bounds from derandomization.
\newblock {\em Theory of Computing}, 7(1):177--184, 2011.

\bibitem{Agra05}
M.~Agrawal.
\newblock Proving lower bounds via pseudo-random generators.
\newblock In {\em Proc. FSTTCS 2005}, 2005.
\newblock Invited paper.

\bibitem{AS09}
M.~Agrawal and R.~Saptharishi.
\newblock {Classifying Polynomials and Identity Testing}.
\newblock {\em Current Trends in Science}, 2009.

\bibitem{Ave09}
M.~Avenda\~no.
\newblock The number of roots of a lacunary bivariate polynomial on a line.
\newblock {\em Journal of Symbolic Computation}, 44(9):1280 -- 1284, 2009.

\bibitem{Bo1900a}
M.~B\^ocher.
\newblock On linear dependence of functions of one variable.
\newblock {\em Bull. Amer. Math. Soc.}, 7:120--121, 1900.

\bibitem{Bo1900b}
M.~B\^ocher.
\newblock The theory of linear dependence.
\newblock {\em The Annals of Mathematics}, 2(1/4):81--96, 1900.

\bibitem{Burg}
P.~B\"urgisser.
\newblock {\em Completeness and Reduction in Algebraic Complexity Theory}.
\newblock Number~7 in Algorithms and Computation in Mathematics. Springer,
  2000.

\bibitem{HistWr}
S.M. Engdahl and A.E. Parker.
\newblock Peano on wronskians: A translation.
\newblock
  \url{http://mathdl.maa.org/mathDL/46/?pa=content&sa=viewDocument&nodeId=3642&pf=1}.

\bibitem{Fro76}
G.F. Frobenius.
\newblock {\"Ueber die Determinante mehrerer Functionen einer Variabeln}.
\newblock {\em {Journal f\"ur die reine und angewandte Mathematik (Crelle's
  Journal)}}, 1874:245--257, 1874.

\bibitem{GKPS11}
B.~Grenet, P.~Koiran, N.~Portier, and Y.~Strozecki.
\newblock The limited power of powering : Polynomial identity testing and a
  depth-four lower bound for the permanent.
\newblock In {\em IARCS Annual Conference on Foundations of Software Technology
  and Theoretical Computer Science (FSTTCS 2011)}, volume~13 of {\em Leibniz
  International Proceedings in Informatics (LIPIcs)}, pages 127--139, 2011.

\bibitem{GK93}
D.~Grigoriev and M.~Karpinski.
\newblock A zero-test and an interpolation algorithm for the shifted sparse
  polynomials.
\newblock {\em Applied Algebra, Algebraic Algorithms and Error-Correcting
  Codes}, pages 162--169, 1993.

\bibitem{GKS91}
D.~Grigoriev, M.~Karpinski, and M.F. Singer.
\newblock The interpolation problem for k- sparse sums of eigenfunctions of
  operators.
\newblock {\em Advances in Applied Mathematics}, 12(1):76--81, 1991.

\bibitem{GKS94}
D.~Grigoriev, M.~Karpinski, and M.F. Singer.
\newblock Computational complexity of sparse rational interpolation.
\newblock {\em SIAM Journal on Computing}, 23(1):1--11, 1994.

\bibitem{HS80}
J.~Heintz and C.P. Schnorr.
\newblock Testing polynomials which are easy to compute (extended abstract).
\newblock In {\em STOC}, pages 262--272, 1980.

\bibitem{Hur90}
W.~Hurewicz.
\newblock {\em Lectures on ordinary differential equations}.
\newblock Dover edition, 1990.

\bibitem{KI04}
V.~Kabanets and R.~Impagliazzo.
\newblock Derandomizing polynomial identity tests means proving circuit lower
  bounds.
\newblock {\em Computational Complexity}, 13(1-2):1--46, 2004.

\bibitem{Kh91}
A.G. Khovanski\u{\i}.
\newblock {\em Fewnomials}.
\newblock Translations of Mathematical Monographs. American Mathematical
  Society, 1991.

\bibitem{Koi10}
P.~Koiran.
\newblock Shallow circuits with high-powered inputs.
\newblock In {\em Proceedings of the Second Symposium on Innovations in
  Computer Science,}, pages 309--320, 2011.

\bibitem{LRW03}
T.Y. Li, J.M. Rojas, and X.~Wang.
\newblock {Counting Real Connected Components of Trinomial Curve Intersections
  and m-nomial Hypersurfaces}.
\newblock {\em Discrete \& Computational Geometry}, 30:379--414, 2003.

\bibitem{NY95}
D.~Novikov and S.~Yakovenko.
\newblock Simple exponential estimate for the number of real zeros of complete
  abelian integrals.
\newblock {\em Annales de l'institut Fourier}, 45(4):897--927, 1995.

\bibitem{Pea1889a}
G.~Peano.
\newblock Sur le d\'eterminant wronskien.
\newblock {\em Mathesis}, 9:75--76, 1889.

\bibitem{Pea1889b}
G.~Peano.
\newblock Sur les wronskiens.
\newblock {\em Mathesis}, 9:110--112, 1889.

\bibitem{PoSz76}
G.~P\'olya and G.~Szeg\H{o}.
\newblock {\em Problems and Theorems in Analysis. Volume II}.
\newblock Springer, 1976.

\bibitem{Pol22}
G.~Pólya.
\newblock On the mean-value theorem corresponding to a given linear homogeneous
  differential equation.
\newblock {\em Transactions of the American Mathematical Society}, 24:312--324,
  1922.

\bibitem{Sax09}
N.~Saxena.
\newblock {Progress on Polynomial Identity Testing}.
\newblock {\em Bull. EATCS}, 99:49--79, 2009.

\bibitem{ShSm95}
M.~Shub and S.~Smale.
\newblock {On the intractability of Hilbert's Nullstellensatz and an algebraic
  version of {P}={NP}}.
\newblock {\em Duke Mathematical Journal}, 81(1):47--54, 1995.

\bibitem{Valiant82}
L.~G. Valiant.
\newblock Reducibility by algebraic projections.
\newblock In {\em Logic and Algorithmic (an International Symposium held in
  honour of {Ernst Specker})}, pages 365--380. Monographie $n^{\tiny o}$ 30 de
  L'Enseignement Math\'ematique, 1982.

\bibitem{VP75}
M.~Voorhoeve and A.J. Van Der~Poorten.
\newblock Wronskian determinants and the zeros of certain functions.
\newblock {\em Indagationes Mathematicae (Proceedings)}, 78(5):417 -- 424,
  1975.

\end{thebibliography}

\appendix

\section{Proof of lemmas for Section~\ref{Sec_PIT}}\label{Sec_proof_PIT}

In this section we prove two lemmas (\ref{lem_cell} and
\ref{lem_determinant}) needed for the proof of
Proposition~\ref{pro_algo}. These proofs are elementary but somewhat technical.

 In the following,
we will compute additions of $n$ integers of size $s$ in
time $O(n(s+\log n))$ and products of $n$ integers of size $s$ in
time $\tilde{O}(ns)$.

We begin by bounding the complexity of expanding a product of sparse polynomials.

\begin{lemma}\label{lem_develop}
  We consider a product of $\mu$ $\tau$-sparse polynomials $P_1, \ldots, P_{\mu}$ of degrees bounded
  by $\gamma$ with integer coefficients of size bounded by
  $\gamma$. This product can be expanded in time $\tilde{O}\left(\tau^\mu \gamma\right)$.
Moreover, the size of new coefficients is bounded by $\mu\gamma+\mu\log\tau$.
\end{lemma}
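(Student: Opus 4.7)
The plan is to expand the product by brute-force enumeration of all formal products of one monomial from each factor, then collect like terms by sorting. Writing each factor as $P_i = \sum_{j=1}^{\tau} a_{i,j} x^{\alpha_{i,j}}$ (with some $a_{i,j}$ possibly zero if $P_i$ has fewer than $\tau$ monomials), the product equals
\[
  \prod_{i=1}^\mu P_i \;=\; \sum_{(j_1, \ldots, j_\mu) \in \{1,\dots,\tau\}^\mu} \Bigl(\prod_{i=1}^\mu a_{i,j_i}\Bigr)\, x^{\sum_{i=1}^\mu \alpha_{i,j_i}},
\]
a sum of at most $\tau^\mu$ formal monomials before collection.

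First, I would iterate over the $\tau^\mu$ tuples $(j_1,\dots,j_\mu)$ and, for each, compute its coefficient (a product of $\mu$ integers of bit size at most $\gamma$, which has size at most $\mu\gamma$ and is computable in time $\tilde{O}(\mu\gamma)$ by the stated integer arithmetic bound) together with its exponent (a sum of $\mu$ non-negative integers of size $O(\log\gamma)$, of total size $O(\log(\mu\gamma))$, computable in time $O(\mu\log\gamma)$). The cost of this enumeration is $\tilde{O}(\tau^\mu \mu\gamma)$.

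Next I would sort the $\tau^\mu$ resulting (exponent, coefficient) pairs by their exponent using a comparison-based sort; this uses $O(\tau^\mu \log \tau^\mu)$ comparisons on integers of size $O(\log(\mu\gamma))$, which costs $\tilde{O}(\tau^\mu)$. Walking through the sorted list and summing coefficients within each block of equal exponents requires at most $\tau^\mu$ additions on integers of size at most $\mu\gamma + \mu\log\tau$, for a total of $\tilde{O}(\tau^\mu\mu\gamma)$. Since $\mu \leq \mu\log\tau = \log(\tau^\mu) \leq \log(\tau^\mu\gamma)$ is polylogarithmic in the output size, the factor $\mu$ is absorbed into the $\tilde{O}$ notation, yielding overall running time $\tilde{O}(\tau^\mu\gamma)$.

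For the coefficient bound, each collected coefficient is a sum of at most $\tau^\mu$ integers, each itself a product of $\mu$ original coefficients of bit size at most $\gamma$ and hence of size at most $\mu\gamma$. Summing at most $\tau^\mu$ such integers produces an integer of size at most $\mu\gamma + \log(\tau^\mu) = \mu\gamma + \mu\log\tau$, as claimed. The only subtle point is bookkeeping inside the $\tilde{O}$ notation: one must check that each $\mu$ factor that appears is polylogarithmic in $\tau^\mu\gamma$, so that it can be hidden. There is no deeper obstacle.
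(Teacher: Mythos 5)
Your proposal is correct and follows essentially the same route as the paper's proof: enumerate the at most $\tau^\mu$ formal monomials, compute each coefficient as a product of $\mu$ integers of size $\gamma$ and each exponent as a sum of $\mu$ small integers, collect like exponents, and bound each collected coefficient by $\mu\gamma+\log(\tau^\mu)=\mu\gamma+\mu\log\tau$. The only difference is cosmetic (you collect like terms by sorting, the paper adds into stored coefficients incrementally), and your explicit remark that the stray $\mu$ factor is absorbed into the $\tilde{O}$ is the same bookkeeping the paper performs implicitly.
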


\begin{proof}
  For expanding such a product, we compute one by one each monomial of
  the sum and we store the coefficients of these $\tau^\mu$ new
  monomials. For computing one coefficient, we have three things to
  do. We have to compute its degree (sum of $\mu$ integers of size
  $\gamma$) in time $O\left(\mu(\gamma+\log \mu)\right)$, its
  coefficient (product of $\mu$ integers of size $\gamma$) in time
  $\tilde{O} \left(\mu\gamma\right)$ and we add together the monomials
  with the same exponent. At the end, at most $\tau^\mu$ coefficients
  will be added together to form a given monomial, so the size of the
  coefficient is bounded by $\mu\gamma+\mu\log\tau$. Hence, as we add
  coefficients one by one, at each step we have to add an integer of
  size $\mu\gamma$ by one of size at most
  $\mu\gamma+\mu\log\tau$. Each term of the sum takes time
\begin{align*}
  & \tilde{O}\left(\mu(\gamma+\log
    \mu)+\mu\gamma+(\mu\gamma+\mu\log\tau)\right) \\
  & = \tilde{O}\left(\mu\gamma+\mu\log(\mu\tau)\right).
\end{align*}
Therefore, computing all coefficients takes time 
\begin{align*}
  & \tilde{O}\left(\tau^\mu(\mu\gamma+\mu\log(\mu\tau))\right) \\
  & = \tilde{O}\left(\tau^\mu \gamma\right).
\end{align*}
\end{proof}

Theorem~\ref{Thm_model1} uses some constants $\beta_{\alpha,s}$
which have been defined in Lemma~\ref{lem_power}. We will need to
compute them.

\begin{lemma}\label{lem_beta}
For every $p$ in $\mathbb{N}$, we have $|\mathcal{S}_p|\leq 2^{p-1}$. For every $\alpha$, $p$ in $\mathbb{N}$ and for every $l$ in
$\mathcal{S}_p$, $0\leq\beta_{\alpha,s}\leq
(p^2+\alpha)^p$. 

Furthermore, for every $\alpha$, $p$ in $\mathbb{N}$ we can compute
all $\beta_{\alpha,s}$ with $s\in\mathcal{S}_q$ and $q\leq p$ in time $\tilde{O}
(2^p\log\alpha)$.
\end{lemma}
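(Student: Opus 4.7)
To bound $|\mathscr{S}_p|$, I would use that the sequences in $\mathscr{S}_p$ correspond bijectively to the unordered partitions of $p$: the entry $s_i$ records how often the part $i$ appears. Since each partition of $p$ gives rise to at least one composition of $p$, and the number of compositions of $p$ is $2^{p-1}$ by the classical placement of $p-1$ bars in the spaces between $p$ ones, this yields $|\mathscr{S}_p| \leq 2^{p-1}$.

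For the bound on $\beta_{\alpha,s}$, I would first establish non-negativity by induction on $p$ using the explicit recurrence derived in Appendix~\ref{sec_proof_power}. The only potentially negative coefficient in that recurrence is $(\alpha-|s|+1)$, which is negative when $|s|\geq\alpha+2$; to neutralize it, I would prove in parallel the auxiliary claim that $\beta_{\alpha,t}=0$ whenever $|t|>\alpha$. For such a $t$, every predecessor (obtained either by decreasing $t_1$ by one, or by moving one unit from coordinate $j$ to coordinate $j-1$) still has norm strictly larger than $\alpha$, while the borderline case $|t|=\alpha+1$ also kills the first-term predecessor via the vanishing factor $(\alpha-|t|+1)=0$. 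Thus whenever the negative coefficient appears, its partner $\beta$ vanishes, and the recurrence only sums non-negative numbers. For the upper bound I would then prove the identity $S(p):=\sum_{s\in\mathscr{S}_p}\beta_{\alpha,s}=\alpha^p$. Summing the recurrence over $\mathscr{S}_p$ and reindexing the first part by $t=s-e_1$ gives $\sum_{t\in\mathscr{S}_{p-1}}(\alpha-|t|)\beta_{\alpha,t}$; reindexing the second by $t=s-e_j+e_{j-1}$ (so that $t_{j-1}\geq 1$ and $s_{j-1}+1=t_{j-1}$) and swapping sums gives $\sum_{t\in\mathscr{S}_{p-1}}\beta_{\alpha,t}\sum_k t_k=\sum_t |t|\,\beta_{\alpha,t}$. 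Adding the two yields $S(p)=\alpha\,S(p-1)$, and with $S(0)=1$ this unwinds to $S(p)=\alpha^p$. Combined with non-negativity, one obtains $\beta_{\alpha,s}\leq\alpha^p\leq(p^2+\alpha)^p$.

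For the algorithm, I would compute all the $\beta_{\alpha,s}$ by dynamic programming in order of increasing $p$, enumerating $\mathscr{S}_q$ for $q=0,1,\ldots,p$ via a standard partition-listing routine. The total number of stored entries is $\sum_{q\leq p}|\mathscr{S}_q|\leq 2^p$, each of bit-size $O(p\log(p+\alpha))$ by the preceding bound. Each entry is computed as a sum of at most $p$ products of a short integer (bit-size $O(\log(p+\alpha))$) by a previously stored $\beta$, costing $\tilde{O}(p\log(p+\alpha))$ per arithmetic operation with fast multiplication, hence $\tilde{O}(p^2\log(p+\alpha))$ per entry and $\tilde{O}(2^p\log\alpha)$ in total once polynomial-in-$p$ factors are absorbed into the $\tilde{O}$. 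The step I expect to be most delicate is the combinatorial identity $S(p)=\alpha\,S(p-1)$: one must verify that in the second sum the weight $s_{j-1}+1$ transforms after reindexing into $t_{j-1}$ in such a way that summing over $j$ produces exactly $|t|$, cancelling the $-|t|$ contributed by the first reindexed sum and leaving the clean multiplicative factor $\alpha$.
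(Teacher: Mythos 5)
Your proposal is correct, but for the central estimate it takes a genuinely different route from the paper. For the cardinality bound the paper argues by induction, exhibiting two surjections from $\mathscr{S}_p$ onto the sequences of $\mathscr{S}_{p+1}$ with $s_1\neq 0$ and with $s_1=0$, which gives $|\mathscr{S}_{p+1}|\leq 2|\mathscr{S}_p|$; your injection of partitions of $p$ into its $2^{p-1}$ compositions is an equally valid and slightly more direct count. For the bound on $\beta_{\alpha,s}$ the paper is much cruder: it sets $M_{\alpha,p}=\max_{s\in\mathscr{S}_p}|\beta_{\alpha,s}|$ and reads off $M_{\alpha,p}\leq(p^2+\alpha)M_{\alpha,p-1}$ directly from the recurrence (at most $p$ terms, each coefficient of absolute value at most about $p^2+\alpha$), which with $M_{\alpha,1}=\alpha$ gives $|\beta_{\alpha,s}|\leq(p^2+\alpha)^{p-1}\alpha$. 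You instead prove the exact identity $\sum_{s\in\mathscr{S}_p}\beta_{\alpha,s}=\alpha^p$ together with non-negativity; your reindexings $t=s-e_1$ and $t=s-e_j+e_{j-1}$ are correct (the weights $\alpha-|t|$ and $|t|$ do recombine into the factor $\alpha$), and your auxiliary claim that $\beta_{\alpha,t}=0$ whenever $|t|>\alpha$ correctly neutralizes the only negative coefficient, the borderline case $|t|=\alpha+1$ being killed by the vanishing factor $\alpha-|t|+1$. This buys you a sharper bound $\beta_{\alpha,s}\leq\alpha^p\leq(p^2+\alpha)^p$ and, more importantly, an actual proof of the inequality $0\leq\beta_{\alpha,s}$ asserted in the statement, which the paper's max-recursion only establishes for absolute values; the price is a longer combinatorial verification, which you could shortcut by evaluating the formula of Lemma~\ref{lem_power} at $f(x)=\e^{x}$ (valid for $\alpha\geq p$ and extending to all $\alpha\in\mathbb{N}$ because both sides of the identity are polynomials in $\alpha$), or by identifying $\beta_{\alpha,s}$ with the Fa\`a di Bruno coefficient $\frac{p!}{\prod_k s_k!\,(k!)^{s_k}}\,\alpha(\alpha-1)\cdots(\alpha-|s|+1)$, which makes both non-negativity and the vanishing for $|s|>\alpha$ immediate. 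The algorithmic part of your proposal (dynamic programming over $\bigcup_{q\leq p}\mathscr{S}_q$, at most $p$ short-by-long products per entry, total $\tilde{O}(2^p\log\alpha)$ after absorbing $\mathrm{poly}(p)$ factors) is essentially identical to the paper's analysis.
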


\begin{proof}
We showed in the proof of Lemma~\ref{lem_power} that
$\beta_{\alpha,(1,0,0,\dots)}=\alpha$ and if $s\in \mathcal{S}_p$ with
$p\neq 1$, then
\begin{multline}\label{Eq_Beta}
  \beta_{\alpha,s}=\mathbb{1}_{s_1\neq0}(\alpha-|s|+1)\beta_{\alpha,(s_1-1,s_2,s_3,\ldots)}\\
  +\underset{s_j\neq0}{\sum_{j\ :\ 2\leq j\leq  p}}(s_{j-1}+1)\beta_{\alpha,(s_1,\ldots,s_{j-1},s_{j-1}+1,s_j-1,s_{j+1},\ldots)}.
\end{multline}
However, in the formula above, the sequences $(s_1-1,s_2,s_3,\ldots)$
and $(s_1,\ldots,s_{j-1},s_{j-1}+1,s_j-1,s_{j+1},\ldots)$ fall in
$\mathcal{S}_{p-1}$. Let us denote
$M_{\alpha,p}=\underset{s\in\mathcal{S}_p}{\max}{|\beta_{\alpha,s}|}$. Hence,
if $p\neq 1$,
(\ref{Eq_Beta}) implies,
\begin{align*}
M_{\alpha,p}\leq (p^2+\alpha) M_{\alpha,p-1}.
\end{align*}
Since, $M_{\alpha,1}=\alpha$, we get by induction  $\beta_{\alpha,s}\leq
(p^2+\alpha)^{p-1}\alpha$.

For computing these constants, we notice that:
\begin{align*}
|\mathcal{S}_{p+1}| & = |\{s\in \mathcal{S}_{p+1}|s_1\neq 0\}|+|\{s\in
\mathcal{S}_{p+1}|s_1= 0\}|\\
& \leq 2|\mathcal{S}_{p}|.
\end{align*}
The inequality comes from the two surjective functions:
\begin{align*}
\mathcal{S}_{p} & \rightarrow
\{s\in \mathcal{S}_{p+1}|s_1\neq 0\} & \textrm{ and } \ \ 
\mathcal{S}_{p} & \rightarrow
\{s\in \mathcal{S}_{p+1}|s_1= 0\}\\ 
s & \mapsto (s_1+1,s_2,\ldots) & 
s & \mapsto (0,s_2,\ldots,s_{s_1},s_{1+s_1}+1,s_{2+s_1},\ldots).
\end{align*}
Hence by induction, $|\mathcal{S}_p|\leq 2^{p-1}$ and $|\bigcup_p \mathcal{S}_p|=O(2^p)$. Then, if $s\in
\mathcal{S}_p$ is fixed, for computing
$\beta_{\alpha,s}$ with (\ref{Eq_Beta}), we need to compute $p$
products of an integer of size $
p\log(p^2+\alpha)$ by an integer of size $\log p$ or $\log
\alpha$ (in time $\tilde{O}\left(p\log
(p^2+\alpha)\right)$)  and a sum of all these products in
time $\tilde{O}\left(p^2\log
(p^2+\alpha)\right)$.
Finally computing all constants $\beta_{\alpha,s}$ with
$s\in\mathcal{S}_q$ and $q\leq p$ needs time $\tilde{O}(2^pp^2\log(p^2+\alpha))$.
That proves the lemma.
\end{proof}

We can now prove the two intermediate lemmas of
Proposition~\ref{pro_algo}. For the following, we keep the notations of Proposition~\ref{pro_algo}.

\begin{lemma}\label{lem_cell}
Computing the expanded polynomial of a cell $(v,u)$ in the matrix $M$ takes time
$\tilde{O}\left(2^{vm}t^{mv}v^mC\log l\right)$. Coefficients are of
size bounded by $\tilde{O}\left(mvC\log tl\right)$ and degrees of size bounded by $Cmv$.
\end{lemma}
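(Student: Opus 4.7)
The plan is to unpack the explicit expression for $T_{u,v}$ derived in the proof of Theorem~\ref{Thm_model1},
$$T_{u,v} = \sum_{r_1+\ldots+r_m=v-1}\, \prod_{j=1}^m \sum_{s \in \mathscr{S}_{r_j}} \beta_{\alpha_{u,j}+l,\, s}\, f_j^{v-1-|s|} \prod_{k=1}^{r_j} \bigl(f_j^{(k)}\bigr)^{s_k},$$
and to bound the cost of expanding it one atomic term at a time, where by ``atomic term'' I mean the polynomial obtained after distributing every sum.

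First I would count the atomic terms. The number of compositions $(r_1,\ldots,r_m)$ with $\sum_j r_j = v-1$ is $\binom{v+m-2}{m-1}$, and for each such composition the number of inner tuples $(s^{(1)},\ldots,s^{(m)}) \in \prod_j \mathscr{S}_{r_j}$ is at most $\prod_j 2^{r_j-1} \leq 2^{v-1}$ by Lemma~\ref{lem_beta}, so there are at most $O(v^m 2^v)$ atomic terms, safely inside $O(v^m 2^{vm})$. Before expanding anything, I pre-compute the derivatives $f_j^{(k)}$ for $j \leq m$, $k \leq v-1$ — each remains $t$-sparse of degree $\leq C$ with coefficients of bit-size $O(C)$ — together with all $\beta_{\alpha_{u,j}+l,\, s}$ needed; since $\alpha_{u,j}+l$ has bit-size $O(C+\log l)$, Lemma~\ref{lem_beta} performs the latter pre-computation in time $\tilde{O}(2^v(C+\log l))$ and yields integers of bit-size $\tilde{O}(vC)$.

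Next, an individual atomic term is the product of the integer constant $\prod_j \beta_{\alpha_{u,j}+l,\, s^{(j)}}$ (of bit-size $\tilde{O}(mvC)$) with $\sum_j \bigl((v-1-|s^{(j)}|)+|s^{(j)}|\bigr)=m(v-1)$ polynomial factors drawn from the pre-computed $\{f_j^{(k)}\}$. Applying Lemma~\ref{lem_develop} with $\tau=t$, $\mu=m(v-1)$ and $\gamma=O(C)$ expands the product in time $\tilde{O}(t^{mv}C)$, producing a polynomial of degree at most $m(v-1)C \leq Cmv$ whose coefficients have bit-size $\tilde{O}(mv(C+\log t))$; multiplying by the precomputed $\beta$-constant preserves these bounds. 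Multiplying the per-term cost by the number of atomic terms yields $\tilde{O}(v^m 2^v t^{mv} C)$, and since $v^m 2^v \leq v^m 2^{vm}$ this fits inside the stated $\tilde{O}(2^{vm} t^{mv} v^m C \log l)$; the pre-computation overhead is of lower order.

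Finally, the atomic polynomials are accumulated into a single polynomial of degree at most $Cmv$; since at most $\tilde{O}(t^{mv})$ coefficients of bit-size $\tilde{O}(mv(C+\log t))$ are summed, the resulting bit-size is $\tilde{O}(mv(C+\log t)+mv \log t) = \tilde{O}(mvC \log(tl))$, matching the statement, while every summand was already of degree at most $Cmv$. The main obstacle is purely bookkeeping — the $2^{vm}$ factor in the statement is loose (only $2^v$ is truly needed) and the $\log l$ factor absorbs the pre-computation of the $\beta$-constants — so once every cost from Lemmas~\ref{lem_develop} and~\ref{lem_beta} is plugged in, the bounds on time, coefficient size and degree all close up comfortably.
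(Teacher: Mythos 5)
Your proposal is correct and follows essentially the same route as the paper's proof: decompose $T_{u,v}$ into the double sum over compositions $(r_1,\ldots,r_m)$ and tuples from the $\mathscr{S}_{r_j}$, bound the number of terms via Lemma~\ref{lem_beta}, expand each product of $m(v-1)$ $t$-sparse factors via Lemma~\ref{lem_develop}, multiply in the $\beta$-constants, and accumulate. Your only deviation is the sharper count $2^{v-1}$ (instead of the paper's $2^{vm}$) for the inner tuples, which just makes the stated bound looser than necessary, exactly as you observe.
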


\begin{proof}
  We also keep the notations of Theorem~\ref{Thm_model1}.
  Each cell $(v,u)$ corresponds to the polynomial 
  \begin{multline*}
    T_{u,v}\left(
    (f_p^{(q-1)})_{1\leq p,q \leq l}\right) = \sum_{\underset{r_1+
      \ldots + r_m = v-1}{r_1,r_2, \ldots, r_m}} \prod_{j=1}^m \left[
    \sum_{s \in \mathscr{S}_{r_j}} \beta_{u,j,s} f_j^{v-1-|s|}
    \prod_{k=1}^{r_j} \left( f_j^{(k)} \right)^{s_k} \right]\\
  = \sum_{\underset{r_1+
      \ldots + r_m = v-1}{r_1,r_2, \ldots, r_m}}
  \sum_{\underset{s^i\in\mathcal{S}_{r_i}}{s^1,s^2, \ldots,
      s^m}} \left[\left(\prod_{j=1}^m\beta_{u,j,s^j}\right)\left(\prod_{j=1}^mf_j^{v-1-|s^j|}\prod_{k=1}^{r_j}\left(f_j^{(k)}\right)^{(s^j)_k}\right)\right].
\end{multline*}

The first sum is a size at most $v^m$ and the second one of size
bounded by $2^{vm}$ (Lemma~\ref{lem_beta}). For computing one term,
first, we need to compute
$\prod_{j=1}^m\beta_{u,j,s^j}=\prod_{j=1}^m\beta_{\alpha_{u,j}+l,s^j}$
wich is a product of $m$ integers, each one of size
$v\log(v^2+2^C+l)$ (since $\alpha\leq 2^C$). It is done in time
$\tilde{O}(mvC\log l)$. 

Now, we want to develop the formula with respect to $x$. We saw in the
proof of Theorem ~\ref{Thm_model1} that each monomial with respect to
the $l^2$ variables $\left(f_p^{(q-1)}\right)_{1\leq p,q\leq l}$ is of
total degree $m(v-1)$. We consider one monomial with respect to
$\left(f_p^{(q-1)}\right)_{1\leq p,q\leq l}$. It is a product of
$m(v-1)$ $t$-sparse polynomials with respect to the variable $x$.  By
Lemma~\ref{lem_develop}, this product (the second parenthesis in the
formula) can be expanded in time $\tilde{O}\left(t^{m(v-1)}C\right)$
and coefficients are of size $Cm(v-1)+m(v-1)\log t$. Then each
coefficient is first, multiplied by the corresponding coefficient
$\prod_{j=1}^m\beta_{u,j,s^j}$ in time
\begin{align*}
  & \tilde{O}\left(\max\{Cm(v-1)+m(v-1)\log
    t, mv\log(v^2+2^C+l)\}\right) \\
  & =\tilde{O}\left( mvC\log tl \right),
\end{align*}
that gives an integer of size at most $\tilde{O}\left(mvC\log tl\right)$.
Second, it is added to the stored coefficient corresponding to the same
monomial in time 
\begin{align*}
  & \tilde{O}\left(mvC\log tl+\log(v^m2^{vm})\right) \\
  & =\tilde{O}\left( mvC\log tl \right).
\end{align*}
To conclude, computing the cell takes time
\begin{align*}
  & \tilde{O}\left(v^m2^{vm}\left(mvC\log l +\left(t^{m(v-1)}C +
        t^{m(v-1)}(  mvC\log tl+mvC\log tl) \right)\right)\right)\\
  & = \tilde{O}\left(2^{vm}v^mt^{m(v-1)}C\log l)\right).
\end{align*}
That completes the proof of the lemma.
\end{proof}

\begin{lemma}\label{lem_determinant}
Assume that the entries of $M$ are given in expanded form (i.e., as
sums of monomials).
  Computing the determinant of $M$ takes time
  $\tilde{O}\left(C2^{4ml^2\log t}\right)$.
\end{lemma}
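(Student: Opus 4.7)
The plan is to compute $\det(M)$ by expanding it directly via the Leibniz formula
\[\det(M) = \sum_{\sigma \in S_l} \mathrm{sgn}(\sigma) \prod_{i=1}^l M_{i,\sigma(i)},\]
so that the work reduces to (i) building each of the $l!$ products $P_\sigma = \prod_i M_{i,\sigma(i)}$ by iterated polynomial multiplication in the variable $x$, and (ii) summing the $l!$ resulting polynomials in $x$.

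I would start by extracting from Lemma~\ref{lem_cell} the bit-parameters that govern every subsequent step: each entry $M_{v,u}$ is a polynomial in $x$ of degree at most $Cml$ whose integer coefficients have bit-size $\tilde{O}(mlC\log tl)$. Then, for a fixed permutation $\sigma$, the product $P_\sigma$ has degree at most $Cml^2$, and since bit-sizes grow additively across the $l-1$ multiplications, its coefficients have bit-size $\tilde{O}(ml^2 C\log tl)$. Using FFT-based polynomial multiplication, the $i$-th multiplication (of a partial product of degree $\sim iCml$ by a new cell of degree $\leq Cml$) costs $\tilde{O}(i^2 C^2 m^2 l^2)$ bit operations, which telescopes to a cost of $\tilde{O}(C^2 m^2 l^5)$ for assembling one $P_\sigma$.

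Summing the $l!$ polynomials $P_\sigma$ then costs $\tilde{O}(l! \cdot C^2 m^2 l^4)$ bit operations, and is dominated by the cost of the products, so the total bit-complexity of the whole algorithm is $\tilde{O}(l! \cdot C^2 m^2 l^5)$. The concluding step is to verify that this is $\tilde{O}(C\cdot 2^{4ml^2\log t})$: using $l! \le 2^{l\log l}$ and the elementary inequality $l\log l = O(ml^2 \log t)$ valid for $t\ge 2$, the factorial factor is swallowed by $2^{4ml^2\log t}$, and the remaining polynomial factors in $C$, $m$, and $l$ are absorbed into the $\tilde{O}$ (note that $\log C$ is polylogarithmic in the claimed bound $C\cdot 2^{4ml^2\log t}$).

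The only real difficulty is the bookkeeping of degree and coefficient-size growth through the $l$ nested multiplications per permutation, which must be carried out consistently so that the FFT-based cost estimate $\tilde{O}(i^2 C^2 m^2 l^2)$ at step $i$ is justified; once this is settled, comparing $l!$ with $2^{4ml^2\log t}$ is immediate and completes the proof.
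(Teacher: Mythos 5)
Your reduction to the Leibniz expansion over the $l!$ permutations is the same first step as the paper, and your handling of $l!\le 2^{l\log l}=2^{O(ml^2\log t)}$ is fine; the gap is in how you multiply out each product $P_\sigma$. You switch to the \emph{dense} representation of the entries of $M$ (degree $\le Cml$, coefficients of bit-size $\tilde{O}(mlC\log tl)$), and FFT/Kronecker multiplication in that representation has cost proportional to (degree)$\times$(coefficient bit-size), i.e.\ already $\tilde{O}(C^2m^2l^2)$ per step and $\tilde{O}(l!\,C^2m^2l^5)$ overall, as you compute. This is \emph{quadratic} in $C$, whereas the bound to be proved, $\tilde{O}\bigl(C2^{4ml^2\log t}\bigr)$, is linear in $C$. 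The final ``absorption'' step is where the argument breaks: polynomial factors in $m$ and $l$ are indeed polylogarithmic in $2^{4ml^2\log t}$, but the extra factor of $C$ is not polylogarithmic in $C2^{4ml^2\log t}$ (only $\log C$ is). Concretely, take $m=l=2$, $t=2$ and let $C\to\infty$: your bound is $\Theta(C^2)$ up to polylogs while the claimed bound is $\Theta(C)$ up to polylogs, so the proposed proof does not establish the lemma. Indeed, even writing down a single entry densely takes $\approx C^2m^2l^2$ bits, so no dense-representation algorithm can meet a bound linear in $C$ when $C$ dominates.

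The paper avoids this by staying in the \emph{sparse} representation: each entry of $M$, although of degree up to $Cml$, has at most $2^{ml}l^mt^{ml}$ monomials --- a count independent of $C$ --- so each permutation is a product of $l$ polynomials that are $\bigl(2^{ml}l^mt^{ml}\bigr)$-sparse, which Lemma~\ref{lem_develop} expands in time $\tilde{O}\bigl(2^{ml^2}l^{ml}t^{ml^2}\,mlC\log t\bigr)=\tilde{O}\bigl(C2^{3ml^2\log t}\bigr)$, with only a linear dependence on $C$ coming from the integer arithmetic on exponents and coefficients. Summing the $\le l^l$ permutation polynomials term by term then gives $\tilde{O}\bigl(C2^{4ml^2\log t}\bigr)$. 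To repair your argument you would need to replace the dense FFT multiplications by sparse expansion (or otherwise decouple the monomial count from $C$); as written, the claimed conclusion does not follow.
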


\begin{proof}
  For each one of the $(l!)$ permutations, we expand the corresponding
  polynomial. Each cell has at most $2^{ml}l^mt^{ml}$
  monomials. Hence, each permutation corresponds to a product of size
  $l$ of $\left(2^{ml}l^mt^{ml}\right)$-sparse polynomials. Powers are
  bounded by $Cml$ and coefficient sizes are bounded by
  $\tilde{O}(mlC\log t)$. By Lemma~\ref{lem_develop}, each permutation
  can be computed in time
\begin{align*}
  &\tilde{O}\left(2^{ml^2}l^{ml}t^{ml^2}mlC\log t\right)\\
  & =\tilde{O}\left(2^{3ml^2\log t}C\right)
\end{align*} and size of coefficients is bounded by 
\begin{align*}
&\tilde{O}\left(ml^2C\log t+l\log
  \left(2^{ml}l^mt^{ml}\right)\right)\\
&=\tilde{O}\left(ml^2C\log t\right).
\end{align*}

For computing the whole determinant, we compute permutations
one-by-one, adding each time new coefficients to the one computed
before. This is done in time 
\begin{align*}
  &\tilde{O}\left(l^l \left(2^{3ml^2\log t}C + 2^{ml^2}l^{ml}t^{ml^2} ml^2C\log t\right)\right)\\
  & =\tilde{O}\left(2^{4ml^2\log t}C \right).
\end{align*}
  \end{proof}

 \end{document}